\numberwithin{equation}{section}
\numberwithin{figure}{section}
\theoremstyle{plain}
\newtheorem{thm}{\protect\theoremname}
\theoremstyle{definition}
\newtheorem{defn}[thm]{\protect\definitionname}
\theoremstyle{plain}
\newtheorem{prop}[thm]{\protect\propositionname}
\theoremstyle{plain}
\newtheorem{cor}[thm]{\protect\corollaryname}
\theoremstyle{remark}
\newtheorem{rem}[thm]{\protect\remarkname}
\theoremstyle{definition}
\newtheorem{example}[thm]{\protect\examplename}
\renewcommand{\email}[2][]{%
  \ifx\emails\@empty\relax\else{\g@addto@macro\emails{,\space}}\fi%
  \@ifnotempty{#1}{\g@addto@macro\emails{\textrm{(#1)}\space}}%
  \g@addto@macro\emails{#2}%
}
\newtheorem{asm}[thm]{Assumption}
\DeclareMathOperator{\ad}{ad}
\DeclareMathOperator{\spn}{span}
\providecommand{\corollaryname}{Corollary}
\providecommand{\definitionname}{Definition}
\providecommand{\examplename}{Example}
\providecommand{\propositionname}{Proposition}
\providecommand{\remarkname}{Remark}
\providecommand{\theoremname}{Theorem}
\providecommand{\corollaryname}{Corollary}
\providecommand{\definitionname}{Definition}
\providecommand{\examplename}{Example}
\providecommand{\propositionname}{Proposition}
\providecommand{\remarkname}{Remark}
\providecommand{\theoremname}{Theorem}
\begin{document}
\title{Non-autonomous soliton hierarchies}
\author{Maciej Błaszak$^{\dagger}$}
\author{Krzysztof Marciniak$^{\ddagger}$}
\author{Błażej Szablikowski$^{\dagger}$}
\address{$^{\dagger}$Faculty of Physics, Adam Mickiewicz University, Uniwersytetu
Poznańskiego 2, 61-614 Poznań, Poland}
\address{$^{\ddagger}$Department of Science and Technology, Linköping University,
Campus Norrköping, 601 74 Norrköping, Sweden}
\email{blaszakm@amu.edu.pl}
\email{krzma@itn.liu.se}
\email{bszablik@amu.edu.pl}
\begin{abstract}
A formalism of systematic construction of integrable non-autonomous
deformations of soliton hierarchies is presented. The theory is formulated
as an initial value problem (IVP) for an associated Frobenius integrability
condition on a Lie algebra. It is showed that this IVP has a formal
solution and within the framewrok of two particular subalgebras of the hereditary
Lie algebra the explicit forms of this formal solution are presented. Finally, this
formalism is applied to Korteveg-de Vries, dispersive water waves and Ablowitz-Kaup-Newell-Segur
soliton hierarchies. 
\end{abstract}

\maketitle

\section{Introduction}

In this article we present a systematic method of deforming of commuting
hierarchies of \emph{autonomous} evolutionary flows, i.e. systems
of evolutionary PDE's of the form 
\begin{equation}
u_{t_{n}}=K_{n}[u],\qquad n=1,2,\ldots,\label{ah}
\end{equation}
(where $u=u(x)=\left(  u_{1}(x),\ldots,u_{N}(x)\right)  ^{T}$ and where each $K_{n}[u]$
is some vector field depending on $u$ and a finite number of its
$x$-derivatives, but not explicitely on times (i.e. evolution parameters
$t_{i}$) and such that 
\begin{equation}
\left[K_{m},K_{n}\right]=0,\qquad m,n=1,2,\ldots,\label{Fa}
\end{equation}
to the \emph{non-autonomous} hierarchies of evolutionary flows 
\begin{equation}
u_{t_{n}}=\mathbb{K}_{n}\left([u],x,t_{1},\ldots,t_{n}\right)\qquad n=1,2,\ldots,\label{nah}
\end{equation}
that satisfy the Frobenius integrability condition 
\begin{equation}
\frac{\partial\mathbb{K}_{n}}{\partial t_{m}}-\frac{\partial\mathbb{K}_{m}}{\partial t_{n}}+\left[\mathbb{K}_{m},\mathbb{K}_{n}\right]=0,\qquad m,n=1,2,\ldots.\label{Fna}
\end{equation}
In \eqref{ah} and \eqref{nah} $K_{n}$ and $\mathbb{K}_{n}$ are
vector fields, depending on $u$ and a finite number of its $x$-derivatives
on some infinite-dimensional functional manifold and $K_{n}$ do not
depend explicitly on times $t_{i}$. Note that in \eqref{nah} we assume
\emph{triangular} dependence of vector fields $\mathbb{K}_{n}$ on
times $t_{i}$ for $1\leqslant i\leqslant n$, which in result simplifies
the Frobenius condition \eqref{Fna} to 
\begin{equation}
\frac{\partial\mathbb{K}_{n}}{\partial t_{m}}+\left[\mathbb{K}_{m},\mathbb{K}_{n}\right]=0,\qquad m<n.\label{tFc}
\end{equation}
The condition \eqref{Fa} (which is nothing else than 
the Frobenius integrability condition for the autonomous system \eqref{ah})
means that the system \eqref{ah} has a
common, multi-time solution through each initial condition $u(x,0,0,\ldots)=u_{0}(x)$.
Likewise, the Frobenius condition
\eqref{Fna} means that the system \eqref{nah} has a common, multi-time
solution through each initial condition $u(x,0,0,\ldots)=u_{0}(x)$.
If these compatibility conditions are not met it makes no sense to
consider the systems in \eqref{ah} or the systems in \eqref{nah}
as hierarchies; they are simply not compatible.

In order to highlight the main algebraic ingredients of our construction
we first (Section \ref{s2}) formulate our theory in a more general
framework, as an initial-value problem (IVP) for $\mathcal{A}$-valued
functions $\mathbb{K}_{n}$ satisfying \eqref{tFc}, where $\mathcal{A}$
is a non-abelian Lie algebra. In Theorem \ref{lemacik} and
Corollary \ref{pikny} we show that this IVP has a formal solution.
Next, we present solutions of the IVP for particular subalgebras of
the hereditary Lie algebra \cite{fuch,fuch2,Ma1996} (Section \ref{s3}) and
then we apply these results to soliton hierarchies (Section \ref{s4}).
In Section \ref{s4} we also find the zero-curvature representations
for the non-autonomous hierarchies \eqref{nah} from zero-curvature
representations of the corresponding autonomous hierarchies \eqref{ah}.
We illustrate our method on three examples: KdV hierarchy (Section
\ref{s5}), dispersive water wave (DWW) hierarchy in the framework of \cite{AF1,AF2} (Section \ref{s5}) and Ablowitz-Kaup-Newell-Segur
(AKNS) hierarchy \cite{AKNS} (Section \ref{s7}).

We believe that the results presented in this article are important
as the majority of research in the theory of integrable PDE's focuses
on autonomous systems of type \eqref{ah}. To our best knowledge,
the non-autonomous deformations \eqref{nah} of soliton hierarchies
have not been previously studied. The usual approach to non-autonomous
soliton equations is to modify a single chosen soliton equation by
assuming some time-dependence of one or more of its coefficients,
see for example \cite{serkin,mani} or \cite{WX}.

This article was inspired by our results presented in \cite{frob0,frob1,frob2},
where we have considered polynomial in times deformations of autonomous
Liouville integrable finite dimensional systems, i.e. systems of the
form 
\begin{align*}
 & \frac{dx}{dt_{i}}=\pi dh_{i},\qquad h_{i}=h_{i}(x),\qquad i=1,\ldots,n,\\
 & \left\{ h_{i},h_{j}\right\} _{\pi}=0,\qquad i,j=1,\ldots,n,
\end{align*}
on some $2n$-dimensional manifold equipped with a Poisson bivector
$\pi$, to non-autonomous Frobenius integrable systems 
\begin{align*}
 & \frac{dx}{dt_{i}}=\pi dH_{i},\qquad H_{i}=H_{i}(x,t_{1},\ldots,t_{n}),\qquad i=1,\ldots,n,\\
 & \left\{ H_{i},H_{j}\right\} _{\pi}+\frac{\partial H_{i}}{\partial t_{j}}-\frac{\partial H_{j}}{\partial t_{i}}=0,\qquad i,j=1,\ldots,n,
\end{align*}
on the same manifold; here $x$ denotes points on this manifold. Another
inspiration for this work was the article \cite{blasz2023} in which
the author constructed non-autonomous KdV hierarchies from Painlevé
systems that were obtained as deformations of Stäckel separable systems.
This construction, however, was dependent on the dimension $n$ of
the underlying Stäckel systems; increasing $n$ led to a completely
different finite hierarchy. This drawback is not present in the theory
we develop in this article.

\section{Frobenius integrability condition in Lie algebras}

\label{s2}

In this section we present a \emph{formal} solution to the Frobenius
integrability condition \eqref{tFc} formulated as an initial-value
problem (IVP) for finite or infinite sets $\left\{ \mathbb{K}_{0},\mathbb{K}_{1},\ldots\right\} $
of elements $\mathbb{K}_{n}$ that belong to a non-abelian Lie algebra
$\mathcal{A}$ and such that each element $\mathbb{K}_{n}$ depends
on (at most) $n+1$ real evolutionary parameters (times) $t_{i}$,
so that 
\begin{equation}
\mathbb{K}_{n}=\mathbb{K}_{n}(t_{0},\ldots,t_{n}).\label{T}
\end{equation}
Thus, all elements $\mathbb{K}_{n}$ will be some $\mathcal{A}$-valued
functions of a finite number of real parameters $t_{i}$. The word
\emph{formal} means in this context that we do not consider any convergence
issues that may arise in the formulas presented in this section. However,
in the next sections we will apply our theory to soliton hierarchies
and then all the expressions appearing in this section will become
convergent (in fact, finite); the $\mathcal{A}$-valued functions
$\mathbb{K}_{n}$ will then become Frobenius integrable deformations
of soliton hierarchies. 
\begin{defn}
We say that the set (finite or not) $\left\{ \mathbb{K}_{0},\mathbb{K}_{1},\ldots\right\} $
of $\mathcal{A}$-valued functions $\mathbb{K}_{i}=\mathbb{K}_{i}(t_{0},\ldots,t_{i})$
satisfies the Frobenius integrability condition (in a triangular form)
if 
\begin{equation}
\frac{\partial\mathbb{K}_{j}}{\partial t_{i}}+\ad_{\mathbb{K}_{i}}\mathbb{K}_{j}=0,\qquad0\leqslant i<j.\label{Frob}
\end{equation}
\end{defn}

Here and in what follows $\ad_{\mathbb{K}_{i}}$ is the adjoint action
in the Lie algebra $\mathcal{A}$ so that $\ad_{\mathbb{K}_{i}}\mathbb{K}_{j}\equiv\left[\mathbb{K}_{i},\mathbb{K}_{j}\right]$.
Note that the \emph{triangular} dependence of $\mathbb{K}_{n}$ on
$t_{i}$ given in \eqref{T} means that \eqref{Frob} is the actual
complete Frobenius condition 
\[
\frac{\partial\mathbb{K}_{j}}{\partial t_{i}}-\frac{\partial\mathbb{K}_{i}}{\partial t_{j}}+\left[\mathbb{K}_{i},\mathbb{K}_{j}\right]=0,\qquad0\leqslant i<j,
\]
simply because \eqref{T} implies that $\frac{\partial\mathbb{K}_{i}}{\partial t_{j}}=0$
for $i<j$.

We begin our exposition by presenting a well-known fact. 
\begin{prop}
Consider the following linear initial value problem 
\begin{equation}
\frac{d\bm{v}}{dt}+\mathbb{A}\bm{v}=0,\qquad\bm{v}(0)=\bm{v}_{0},\label{0}
\end{equation}
where $\mathbb{A}=\mathbb{A}(t)$ is some time-dependent linear operator
acting in $\mathcal{A}$, $\bm{v}=\bm{v}(t)$ is an $\mathcal{A}$-valued
function and where $\bm{v}_{0}\in\mathcal{A}$. Then the formal solution
of this problem is 
\begin{align*}
\bm{v}(t) & =\bm{v}_{0}-\partial_{t}^{-1}\mathbb{A}\bm{v}_{0}+\partial_{t}^{-1}\mathbb{A}\partial_{t}^{-1}\mathbb{A}\bm{v}_{0}-\partial_{t}^{-1}\mathbb{A}\partial_{t}^{-1}\mathbb{A}\partial_{t}^{-1}\mathbb{A}\bm{v}_{0}+\ldots\\
 & \equiv(1+\partial_{t}^{-1}\mathbb{A})^{-1}\bm{v}_{0},
\end{align*}
where 
\[
\partial_{t}^{-1}=\int_{0}^{t}dt^{\prime}
\]
is the formal linear operator of definite integration with respect
to the time variable $t$ so that $\partial_{t}^{-1}\mathbb{A}$ is
a $t$-dependent operator on  $\mathcal{A}$.
\end{prop}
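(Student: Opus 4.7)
The plan is to recast the ODE as a fixed-point equation via integration and then unfold this fixed-point equation into a formal Neumann series, exactly as in the classical Picard iteration, treating $\partial_{t}^{-1}\mathbb{A}$ as a formal linear operator on $\mathcal{A}$-valued functions.

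First, I would integrate both sides of \eqref{0} from $0$ to $t$ and use the initial condition $\bm{v}(0)=\bm{v}_{0}$ to obtain
\[
\bm{v}(t)=\bm{v}_{0}-\int_{0}^{t}\mathbb{A}(s)\bm{v}(s)\,ds=\bm{v}_{0}-\partial_{t}^{-1}\mathbb{A}\bm{v},
\]
which is the same as
\[
(1+\partial_{t}^{-1}\mathbb{A})\bm{v}=\bm{v}_{0}.
\]
This is already the desired formula in disguise: inverting the operator $1+\partial_{t}^{-1}\mathbb{A}$ as a formal geometric series gives $\bm{v}=(1+\partial_{t}^{-1}\mathbb{A})^{-1}\bm{v}_{0}=\sum_{k\geq 0}(-\partial_{t}^{-1}\mathbb{A})^{k}\bm{v}_{0}$, which is exactly the series claimed.

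Second, to make this rigorous at the formal level, I would define the iterates $\bm{v}^{(0)}=\bm{v}_{0}$ and $\bm{v}^{(k+1)}=\bm{v}_{0}-\partial_{t}^{-1}\mathbb{A}\bm{v}^{(k)}$. A straightforward induction on $k$ shows that $\bm{v}^{(k)}=\sum_{j=0}^{k}(-\partial_{t}^{-1}\mathbb{A})^{j}\bm{v}_{0}$, so the formal limit, interpreted termwise as an element of the space of formal $\mathcal{A}$-valued functions of $t$, coincides with the stated series.

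Finally, I would verify that this formal series really does solve the IVP. The initial condition is automatic, since each term $(-\partial_{t}^{-1}\mathbb{A})^{k}\bm{v}_{0}$ with $k\geq 1$ vanishes at $t=0$ by the very definition $\partial_{t}^{-1}=\int_{0}^{t}\cdot\, dt'$. Differentiating the series termwise and using $\partial_{t}\partial_{t}^{-1}=1$ yields
\[
\frac{d\bm{v}}{dt}=\sum_{k\geq 1}(-1)^{k}\mathbb{A}(\partial_{t}^{-1}\mathbb{A})^{k-1}\bm{v}_{0}=-\mathbb{A}\sum_{k\geq 0}(-\partial_{t}^{-1}\mathbb{A})^{k}\bm{v}_{0}=-\mathbb{A}\bm{v},
\]
as required. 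The only ``obstacle'' here is conceptual rather than technical: one must agree to work at the purely formal level, so that the geometric series for $(1+\partial_{t}^{-1}\mathbb{A})^{-1}$ and termwise differentiation are admissible without invoking convergence. This is consistent with the convention fixed at the start of Section \ref{s2}, and the convergence questions will take care of themselves in the soliton applications, where, as announced, the relevant series truncate.
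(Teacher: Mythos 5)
Your proposal is correct and its decisive step --- termwise differentiation of the formal Neumann series using $\partial_{t}\partial_{t}^{-1}=1$ to factor out $-\mathbb{A}$ --- is exactly the computation in the paper's proof, which rests on the same formal identity $(1+\mathbb{B})^{-1}=1-\mathbb{B}(1+\mathbb{B})^{-1}$ with $\mathbb{B}=\partial_{t}^{-1}\mathbb{A}$. The only addition is that you also \emph{derive} the formula from the equivalent integral equation via Picard iteration rather than merely verifying it, which is harmless extra motivation but does not change the argument.
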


\begin{proof}
Indeed, given that for any linear operator $\mathbb{B}$ in $\mathcal{A}$
we have, formally, 
\begin{equation}
(1+\mathbb{B})^{-1}\equiv1-\mathbb{B}+\mathbb{B}^{2}-\mathbb{B}^{3}+\ldots=1-\mathbb{B}\,(1+\mathbb{B})^{-1},\label{B}
\end{equation}
then, taking $\mathbb{B}=\partial_{t}^{-1}\mathbb{A}$, we obtain
\begin{align*}
\frac{d\bm{v}(t)}{dt} & =\partial_{t}(1+\partial_{t}^{-1}\mathbb{A})^{-1}\bm{v}_{0}=\partial_{t}(\bm{v}_{0}-\partial_{t}^{-1}\mathbb{A}(1+\partial_{t}^{-1}\mathbb{A})^{-1}\bm{v}_{0})\\
 & =-\mathbb{A}(1+\partial_{t}^{-1}\mathbb{A})^{-1}\bm{v}_{0}=-\mathbb{A}\bm{v}(t).
\end{align*}
\end{proof}
In the case $\mathbb{A}$ does not depend on $t$ we obtain the well
known solution of the IVP \eqref{0} in the form of the exponential
of $\mathbb{A}$: 
\begin{equation}
\frac{d\mathbb{A}}{dt}=0\quad\Rightarrow\quad\bm{v}(t)=(1+\partial_{t}^{-1}\mathbb{A})^{-1}\bm{v}_{0} = \exp(-t\mathbb{A})\bm{v}_{0}.\label{nz}
\end{equation}
We will now generalize this result to the case of a system of linear
time-dependent equations. As notified above, we will work with $\mathcal{A}$-valued
functions of some parameters $t_{i}$, which we will often call \emph{times}. 
\begin{thm}
\label{lemacik}Suppose that $n$ $\mathcal{A}$-valued functions
$\mathbb{K}_{0},\ldots,\mathbb{K}_{n-1}$ satisfy the following conditions:
\begin{subequations} \label{w} 
\begin{align}
\mathbb{K}_{i}=\mathbb{K}_{i}(t_{0},\ldots,t_{i}), & \qquad i=0,1,\ldots,n-1,\label{w1}\\
\frac{\partial\mathbb{K}_{j}}{\partial t_{i}}+\ad_{\mathbb{K}_{i}}\mathbb{K}_{j} & =0,\qquad0\leqslant i<j<n.\label{w2}
\end{align}
\end{subequations} 
Then the initial
value problem 
\begin{subequations} \label{1} 
\begin{align}
\frac{\partial\mathbb{K}}{\partial t_{i}}+\ad_{\mathbb{K}_{i}}\mathbb{K} & =0,\qquad i=0,\ldots,n-1,\label{1a}\\
\mathbb{K}(0,\ldots,0) & =\bar{\mathbb{K}}\in\mathcal{A}.\label{1b}
\end{align}
\end{subequations} 
for the $\mathcal{A}$-valued function
$\mathbb{K}=\mathbb{K}(t_{0},\ldots,t_{n-1})$ has the formal solution 
\begin{equation}
\mathbb{K}(t_{0},\ldots,t_{n-1})=(1+\partial_{t_{n-1}}^{-1}\ad_{\mathbb{K}_{n-1}})^{-1}\cdots(1+\partial_{t_{1}}^{-1}\ad_{\mathbb{K}_{1}})^{-1}(1+\partial_{t_{0}}^{-1}\ad_{\mathbb{K}_{0}})^{-1}\bar{\mathbb{K}},\label{2}
\end{equation}
where 
\[
\partial_{t_{i}}^{-1}f(t_{i})\equiv\int_{0}^{t_{i}}f(t)\,dt.
\] 
\end{thm}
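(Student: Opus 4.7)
The plan is to proceed by induction on $n$. The base case $n=1$ is exactly the proposition just established: $\mathbb{K}(t_0) = (1+\partial_{t_0}^{-1}\ad_{\mathbb{K}_0})^{-1}\bar{\mathbb{K}}$ satisfies $\partial_{t_0}\mathbb{K} + \ad_{\mathbb{K}_0}\mathbb{K} = 0$ with $\mathbb{K}(0) = \bar{\mathbb{K}}$. For the inductive step I would introduce the shorthand $L_i := (1+\partial_{t_i}^{-1}\ad_{\mathbb{K}_i})^{-1}$ and set $\mathbb{K}^{(n-2)} := L_{n-2}\cdots L_0\bar{\mathbb{K}}$, which by the inductive hypothesis solves the IVP \eqref{1} restricted to the indices $i=0,\ldots,n-2$. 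The candidate solution is then $\mathbb{K} := L_{n-1}\mathbb{K}^{(n-2)}$.

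First I would dispatch the ``easy'' parts. The initial condition $\mathbb{K}(0,\ldots,0)=\bar{\mathbb{K}}$ follows because evaluating $L_i\bm{v}$ at $t_i=0$ kills the integral $\partial_{t_i}^{-1}$, so the layered expression unwinds down to $\bar{\mathbb{K}}$. The equation indexed by $i=n-1$ follows directly from the proposition since $\mathbb{K}^{(n-2)}$ does not depend on $t_{n-1}$.

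The crux, and the main obstacle, is verifying $D_i\mathbb{K}=0$ for $i<n-1$, where $D_i := \partial_{t_i}+\ad_{\mathbb{K}_i}$. Applying $D_i$ to the defining relation $\mathbb{K}+\partial_{t_{n-1}}^{-1}\ad_{\mathbb{K}_{n-1}}\mathbb{K}=\mathbb{K}^{(n-2)}$ and using the inductive hypothesis $D_i\mathbb{K}^{(n-2)}=0$, the task reduces to proving that $D_i$ commutes with the operator $\partial_{t_{n-1}}^{-1}\ad_{\mathbb{K}_{n-1}}$. Commutation with $\partial_{t_{n-1}}^{-1}$ is immediate since $\mathbb{K}_i$ depends only on $t_0,\ldots,t_i$ and hence not on $t_{n-1}$. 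The delicate point is commutation with $\ad_{\mathbb{K}_{n-1}}$: expanding $[D_i,\ad_{\mathbb{K}_{n-1}}]\bm{u}$ produces the term $[\partial_{t_i}\mathbb{K}_{n-1},\bm{u}]$, which by the hypothesis \eqref{w2} equals $-[[\mathbb{K}_i,\mathbb{K}_{n-1}],\bm{u}]$; a single application of the Jacobi identity then cancels this against the mixed term $[\mathbb{K}_i,[\mathbb{K}_{n-1},\bm{u}]]$, leaving $[D_i,\ad_{\mathbb{K}_{n-1}}]=0$. This is the only place where the Frobenius hypothesis \eqref{w2} on the data is used. With both commutation relations in hand we obtain $(1+\partial_{t_{n-1}}^{-1}\ad_{\mathbb{K}_{n-1}})D_i\mathbb{K}=0$, and the formal invertibility of this operator (its formal inverse being $L_{n-1}$) yields $D_i\mathbb{K}=0$, closing the induction.
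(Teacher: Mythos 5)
Your proof is correct. It is organized differently from the paper's: the paper (Appendix A) proves the claim by a single direct computation, differentiating the full product $\xi_{n-1}^{-1}\cdots\xi_{0}^{-1}\bar{\mathbb{K}}$ (with $\xi_{i}=1+\partial_{t_{i}}^{-1}\ad_{\mathbb{K}_{i}}$) via the Leibniz rule, the identity $\partial_{t_{i}}\xi_{j}^{-1}=-\xi_{j}^{-1}[\xi_{j},\ad_{\mathbb{K}_{i}}]\xi_{j}^{-1}+\xi_{j}^{-1}\partial_{t_{i}}$, and a telescoping cancellation of the resulting sums. You instead induct on $n$, peel off the outermost factor, and reduce everything to the operator identity $[D_{i},\,1+\partial_{t_{n-1}}^{-1}\ad_{\mathbb{K}_{n-1}}]=0$ for $i<n-1$, followed by formal injectivity of that operator. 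The two arguments rest on the same core fact — the Frobenius hypothesis \eqref{w2} gives $\partial_{t_{i}}\ad_{\mathbb{K}_{j}}=-[\ad_{\mathbb{K}_{i}},\ad_{\mathbb{K}_{j}}]$, which is the paper's equation \eqref{3} and is exactly your Jacobi-identity cancellation — but your packaging via the ``covariant'' operator $D_{i}=\partial_{t_{i}}+\ad_{\mathbb{K}_{i}}$ avoids the correction terms and the telescoping sum entirely, at the cost of an induction and an appeal to injectivity. (For the last step you could even avoid formal inversion: $w:=D_{i}\mathbb{K}$ satisfies $w=-\partial_{t_{n-1}}^{-1}\ad_{\mathbb{K}_{n-1}}w$, i.e.\ a linear homogeneous ODE in $t_{n-1}$ with $w|_{t_{n-1}=0}=0$, whence $w=0$.) Both proofs are valid; yours is arguably the cleaner bookkeeping.
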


Assuming that $\bar{\mathbb{K}}$ in \eqref{1b} depends on an additional
evolution parameter $t_{n}$ we arrive at the following corollary. 
\begin{cor}
\label{pikny}Suppose that $n$ $\mathcal{A}$-valued functions $\mathbb{K}_{0},\ldots,\mathbb{K}_{n-1}$
satisfy the conditions \eqref{w}. Suppose also that an $\mathcal{A}$-valued
function $\mathbb{K}_{n}\mathbb{=}$ $\mathbb{K}_{n}(t_{0},\ldots,t_{n})$
satisfies the following initial value problem 
\begin{subequations}\label{p}
\begin{align}
\frac{\partial\mathbb{K}_{n}}{\partial t_{i}}+\ad_{\mathbb{K}_{i}}\mathbb{K}_{n} & =0,\qquad0\leqslant i<n,\qquad\label{p1}\\
\mathbb{K}_{n}(\underset{n}{\underbrace{0,\ldots,0}},t_{n}) & =\bar{\mathbb{K}}_{n}(t_{n}),\label{p2}
\end{align}
\end{subequations} 
where $\bar{\mathbb{K}}_{n}(t_{n})$ is an $\mathcal{A}$-valued
function of $t_{n}$. Then, the IVP \eqref{p} has the unique (formal)
solution 
\begin{equation}
\mathbb{K}_{n}(t_{0},\ldots,t_{n})=(1+\partial_{t_{n-1}}^{-1}\ad_{\mathbb{K}_{n-1}})^{-1}\cdots(1+\partial_{t_{1}}^{-1}\ad_{\mathbb{K}_{1}})^{-1}(1+\partial_{t_{0}}^{-1}\ad_{\mathbb{K}_{0}})^{-1}\bar{\mathbb{K}}_{n}(t_{n}).\label{D1}
\end{equation}
\end{cor}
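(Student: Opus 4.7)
The strategy is to observe that Corollary \ref{pikny} is essentially Theorem \ref{lemacik} applied with a $t_n$-dependent initial condition, where $t_n$ plays the role of a passive parameter that does not appear actively in the system \eqref{p1}. The key point is that each of the operators $\ad_{\mathbb{K}_i}$ and $\partial_{t_i}^{-1}$ for $0 \leqslant i \leqslant n-1$ involves only the variables $t_0, \ldots, t_{n-1}$, so they commute with the assignment $t_n \mapsto \bar{\mathbb{K}}_n(t_n)$ and with evaluation at any fixed value of $t_n$.

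The first step in my plan is to fix an arbitrary value of $t_n$ and apply Theorem \ref{lemacik} to the resulting IVP with initial datum $\bar{\mathbb{K}} := \bar{\mathbb{K}}_n(t_n) \in \mathcal{A}$. The hypotheses on $\mathbb{K}_0, \ldots, \mathbb{K}_{n-1}$ required by the theorem are exactly \eqref{w}, and the resulting formal solution is precisely \eqref{D1}; hence \eqref{p1} holds at this fixed $t_n$, and since $t_n$ was arbitrary, it holds as an identity of functions. The second step is to check the initial condition \eqref{p2}: substituting $t_0 = \cdots = t_{n-1} = 0$ into \eqref{D1}, each factor $(1+\partial_{t_i}^{-1}\ad_{\mathbb{K}_i})^{-1}$ reduces to the identity, since $\partial_{t_i}^{-1}$ evaluated at its own base point annihilates every term of the Neumann expansion beyond the constant one. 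Thus the right-hand side of \eqref{D1} collapses to $\bar{\mathbb{K}}_n(t_n)$, as required.

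Uniqueness of the formal solution is inherited from the uniqueness asserted in Theorem \ref{lemacik}, applied separately for each value of $t_n$, or equivalently from the observation that \eqref{p1} together with \eqref{p2} recursively determines every coefficient in the formal expansion of $\mathbb{K}_n$ in powers of $t_0, \ldots, t_{n-1}$ at the origin. I do not anticipate any serious obstacle; the only subtlety worth flagging explicitly is that inserting the parameter $t_n$ into the initial datum leaves every manipulation in the proof of Theorem \ref{lemacik} intact, because neither $\partial_{t_i}^{-1}$ nor $\ad_{\mathbb{K}_i}$ touches $t_n$, so the compatibility conditions \eqref{w2} already provided by the hypotheses on $\mathbb{K}_0, \ldots, \mathbb{K}_{n-1}$ suffice to make all $n$ equations of \eqref{p1} simultaneously consistent.
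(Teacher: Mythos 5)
Your proposal is correct and follows essentially the same route as the paper, which derives the corollary in a single sentence by treating $t_n$ as a passive parameter in the initial datum of Theorem~\ref{lemacik}; you merely make explicit the two points the paper leaves implicit (that the operators $\partial_{t_i}^{-1}\ad_{\mathbb{K}_i}$ for $i<n$ do not touch $t_n$, and that each Neumann factor collapses to the identity at $t_0=\cdots=t_{n-1}=0$). Your added remark on uniqueness via the recursive determination of the formal expansion is a harmless and correct supplement that the paper does not spell out.
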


For the proof of Theorem \ref{lemacik}, see Appendix A. Let us now
comment this theorem and the corollary that follows. The conditions
\eqref{w} means that the $\mathcal{A}$-valued functions $\mathbb{K}_{0},\ldots,\mathbb{K}_{n-1}$
satisfy the Frobenius integrability condition \eqref{Frob}. Further
in the article the $\mathcal{A}$-valued functions $\mathbb{K}_{i}$
are vector fields on some finite or infinite-dimensional manifold
$\mathcal{M}$, so that $\mathbb{K}_{i}=\mathbb{K}_{i}(t_{0},\ldots,t_{i},u)$
with $u\in\mathcal{M}$, and the Frobenius condition \eqref{Frob}
in turn implies that the corresponding non-autonomous dynamical systems
\[
\frac{du}{dt_{i}}=\mathbb{K}_{i}(t_{0},\ldots,t_{i},u),\qquad i=0,\ldots,n-1,
\]
posses (at least locally) a common, multi-time solution $u=u(t_{0},\ldots,t_{n-1};u_{0})$
through each point $u_{0}$ of the manifold $\mathcal{M}$. Theorem
\ref{lemacik} yields us then a tool to add one more vector field
$\mathbb{K}$ to the set $\left\{ \mathbb{K}_{0},\ldots,\mathbb{K}_{n-1}\right\} $
so that the set $\left\{ \mathbb{K}_{0},\ldots,\mathbb{K}_{n-1},\mathbb{K}_{n}\equiv\mathbb{K}\right\} $
still satisfies the Frobenius integrability condition \eqref{Frob}.
If now the initial condition $\bar{\mathbb{K}}$ for \eqref{1} depends
additionally on a new evolution parameter $t_{n}$ then the solution
\eqref{D1} in Corollary \ref{pikny} provides us with the set $\left\{ \mathbb{K}_{0},\ldots,\mathbb{K}_{n-1},\mathbb{K}_{n}\equiv\mathbb{K}\right\} $
of non-autonomous vector fields, depending now on one more evolution parameter $t_n$, and such that they satisfy the Frobenius condition
\eqref{Frob} for $0\leqslant i<j\leqslant n$.

Thus, Corollary \ref{pikny} is a useful tool to recursively construct
sets of non-autonomous vector fields, \emph{triangularly} depending
on $t_{i}$ and satisfying the Frobenius condition. This corollary
will be used in the following sections to produce non-autonomous Frobenius
integrable deformations of various soliton hierarchies.

Let us conclude this section with a specification of Theorem~\ref{lemacik}
to the situation when $\mathbb{K}_{i}=\mathbb{K}_{i}(t_{0},\ldots,t_{i-1})$ only (in
such situation we say that neither of $\mathbb{K}_{i}$ depends on its
own evolution parameter $t_{i}$). In this particular situation, in
accordance with \eqref{nz}, 
\[
(1+\partial_{t_{i}}^{-1}\ad_{\mathbb{K}_{i}})^{-1}=\exp(-t_{i}\ad_{\mathbb{K}_{i}}),
\]
and thus we have the following remark. 
\begin{rem}
If $\mathbb{K}_{i}=\mathbb{K}_{i}(t_{0},\ldots,t_{i-1})$ for $i=0,\ldots,n-1$
and $\bar{\mathbb{K}}_{n}(t_{n})=\bar{\mathbb{K}}_{n}$, i.e.~if neither of
$\bar{\mathbb{K}}_{i}$ depends on its own evolution parameter $t_{i}$,
then the solution of the IVP \eqref{p} takes the form
\begin{equation}
\mathbb{K}_{n}(t_{0},\ldots,t_{n-1})=\exp(-t_{n-1}\ad_{\mathbb{K}_{n-1}})\cdots\exp(-t_{1}\ad_{\mathbb{K}_{1}})\exp(-t_{0}\ad_{\mathbb{K}_{0}})\bar{\mathbb{K}}_{n}.\label{D2}
\end{equation}
Naturally, in this case 
\[
\mathbb{K}_{n}(0,\ldots,0)=\bar{\mathbb{K}}_{n}.
\]
\end{rem}

\section{Frobenius integrability in hereditary algebras}

\label{s3}

The formulas \eqref{D1} and \eqref{D2} cannot be of any practical
use until we have some method to compute the expressions on their
right hand sides. To achieve this we will assume that our non-abelian
Lie algebra $\mathcal{A}$ is a semi-product of a commutative
algebra and the Witt algebra (a centerless Virasoro symmetry algebra),
i.e. $\mathcal{A}$ is the so-called hereditary algebra \cite{fuch,fuch2,Ma1996}.
Specifically, we assume that the hereditary algebra $\mathcal{A}$
is generated by the elements $K_{n}\in$ $\mathcal{A}$, $n=1,2,\ldots$,
and $\sigma_{m}\in$ $\mathcal{A}$, $m=-1,0,1,\ldots$, such that
\[
[K_{n},K_{m}]=0,\qquad[\mathcal{\sigma}_{n},K_{m}]=(\alpha m+\rho'-1)K_{n+m},\qquad[\sigma_{n},\sigma_{m}]=\alpha(m-n)\sigma_{n+m},
\]
where $\rho',\alpha\in\mathbb{R}$ and $\alpha\neq0$. This choice
is motivated by the fact that hereditary algebras of soliton hierarchies,
that we will consider in this article, have this structure.
By a simple rescaling of all $\sigma_{n}$ we can always set $\alpha=1$.
Thus, in this article we will consider the hereditary algebra $\mathcal{A}$
with the generators $K_{n}$ and $\sigma_{m}$ satisfying the commutation relations
\begin{align}
[K_{n},K_{m}] & =0,\qquad m,n=1,2,\ldots,\nonumber \\{}
[\mathcal{\sigma}_{n},K_{m}] & =\kappa_{m}K_{n+m},\qquad n=-1,0,1,\ldots,\quad m=1,2,\ldots,\label{A}\\{}
[\sigma_{n},\sigma_{m}] & =(m-n)\sigma_{n+m},\qquad m,n=-1,0,1,\ldots,\nonumber 
\end{align}
(so that $\rho-1\equiv\frac{1}{\alpha}\left(\rho'-1\right)$) and
where we denote $K_{0}\equiv0$. Here and further on we use the notation
\begin{equation}
\kappa_{m}\equiv\rho+m-1.\label{km}
\end{equation}
We can now choose our initial conditions $\bar{\mathbb{K}}_{n}(t_{n})$ in \eqref{p2}
as some very particular deformations (times-dependent linear combinations)
of the above generators of $\mathcal{A}$. It turns out that formulas
\eqref{D1} and \eqref{D2} attain a compact, finite form in
two particular cases: 
\begin{enumerate}
\item $\bar{\mathbb{K}}_{n}(t_{n})$ is a $\mathcal{A}_{-1}$-valued function
for all $n$, 
\item $\bar{\mathbb{K}}_{n}(t_{n})$ is a $\mathcal{A}_{0}$-valued function
for all $n$, 
\end{enumerate}
where 
\[
\mathcal{A}_{-1}:=\spn\left\{ \sigma_{-1},K_{1},K_{2},\ldots\right\} \quad\text{and\ensuremath{\quad}}\mathcal{A}_{0}:=\spn\left\{ \sigma_{0},K_{1},K_{2},\ldots\right\} 
\]
are two particular subalgebras of the hereditary algebra $\mathcal{A}$.
They are exceptional in the sense that for any $n$ the sets $\mathcal{A}_{\varepsilon}^{(n)}\equiv\spn\left\{ \sigma_{i},K_{1},K_{2},\ldots,K_{n}\right\} $,
where $\varepsilon=-1$ or $\varepsilon=0$, are finite dimensional
subalgebras of $\mathcal{A}$. On the level of the algebras $\mathcal{A}_{\varepsilon}$
we can interpret the solutions $\mathbb{K}_{n}$ of IVP \eqref{p}
as leading to construction of new deformed bases of $\mathcal{A}_{\varepsilon}$
satisfying the Frobenius integrability condition \eqref{Frob}.

\subsection{Frobenius integrability in the hereditary subalgebra $\mathcal{A}_{-1}$}

We begin with the first case, i.e. when $\bar{\mathbb{K}}_{n}(t_{n})$ is a $\mathcal{A}_{-1}$-valued function.
\begin{thm}
\label{ii}Consider the IVP \eqref{p} with the initial conditions
\eqref{p2} in the form 
\begin{equation}
\mathbb{K}_{n}(0,\ldots,0,t_{n})=\bar{\mathbb{K}}_{n}(t_{n})\equiv\sigma_{-1}+\sum_{i=1}^{n}\mathbf{a}_{n,i}\left(t_{n}\right)K_{i},\qquad n=0,1,...,\label{wp2}
\end{equation}
where $\mathbf{a}_{n,i}(t_{n})$ are arbitrary differentiable functions  
(thus $\bar{\mathbb{K}}_{n}(t_{n})$ are $\mathcal{A}_{-1}$-valued functions). 
Then, the solution \eqref{D1} of the IVP \eqref{p} is unique and attains the form 
\begin{equation}
\mathbb{K}_{n}=\sigma_{-1}+\sum_{i=1}^{n}\mathbf{u}_{n,i}(t_{0},\ldots,t_{n})K_{i},\label{Rm1}
\end{equation}
where 
\begin{equation}
\mathbf{u}_{n,1}=-\mathbf{c}_{n},\qquad\mathbf{u}_{n,i}=\frac{(-1)^{i}}{[\kappa_{i}]!}\partial_{t_{0}}^{i-1}\mathbf{c}_{n},\qquad i=2,\ldots,n,\label{def1}
\end{equation}
with 
\begin{equation}
\begin{split}\mathbf{c}_{n}(t_{0},\ldots,t_{n}) & \equiv\sum_{m=2}^{n-1}\sum_{r=2}^{m}\sum_{s=1}^{r-1}\frac{(-1)^{r-1}[\kappa_{r}]!}{(r-s-1)!}(\tau_{m-1})^{r-s-1}\left(\partial_{t_{m}}^{-1}\right)^{s}\mathbf{a}_{m,r}(t_{m})\\
 & \quad+\sum_{r=1}^{n}\frac{(-1)^{r}[\kappa_{r}]!}{(r-1)!}(\tau_{n-1})^{r-1}\mathbf{a}_{n,r}(t_{n}),\qquad n\in\mathbb{N}.
\end{split}
\label{cn}
\end{equation}
\end{thm}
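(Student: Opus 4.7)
The plan is (i) to show that the ansatz \eqref{Rm1} is preserved by the recursive construction of Corollary~\ref{pikny}, (ii) to rewrite the Frobenius condition \eqref{p1} as a triangular scalar PDE system for the components $u_{n,j}$, (iii) to solve the $t_{0}$-subsystem, which reduces everything to the single scalar function $\mathbf{c}_{n}\equiv-u_{n,1}$ and yields \eqref{def1}, and (iv) to integrate the remaining equations to recover \eqref{cn}. Uniqueness then comes for free from Corollary~\ref{pikny}.

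For step (i), observe that $\mathcal{A}_{-1}=\spn\{\sigma_{-1},K_{1},K_{2},\ldots\}$ is a Lie subalgebra of $\mathcal{A}$: from \eqref{A}, $[K_{m},K_{n}]=0$, $[\sigma_{-1},K_{m}]=\kappa_{m}K_{m-1}\in\mathcal{A}_{-1}$ (recall $K_{0}=0$), and $[\sigma_{-1},\sigma_{-1}]=0$. Proceeding by induction on $n$, if $\mathbb{K}_{i}\in\mathcal{A}_{-1}$ for $i<n$ and $\bar{\mathbb{K}}_{n}(t_{n})\in\mathcal{A}_{-1}$, then the operators $(1+\partial_{t_{i}}^{-1}\ad_{\mathbb{K}_{i}})^{-1}$ appearing in \eqref{D1} stabilise $\mathcal{A}_{-1}$, so $\mathbb{K}_{n}\in\mathcal{A}_{-1}$; moreover, because $\ad_{\sigma_{-1}}$ strictly lowers the $K$-index, no $K_{j}$ with $j>n$ is generated, which justifies the truncation in \eqref{Rm1}. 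For step (ii), a direct use of \eqref{A} produces
$$\ad_{\mathbb{K}_{i}}\mathbb{K}_{n}=\sum_{j\ge1}\kappa_{j+1}\bigl(u_{n,j+1}-u_{i,j+1}\bigr)K_{j},$$
so that \eqref{p1} is equivalent to the scalar system
$$\partial_{t_{i}}u_{n,j}+\kappa_{j+1}\bigl(u_{n,j+1}-u_{i,j+1}\bigr)=0,\qquad 0\le i<n,\ j\ge1,$$
with the convention $u_{i,k}=0$ for $k>i$. Specialising to $i=0$ (where $\bar{\mathbb{K}}_{0}=\sigma_{-1}$ forces $u_{0,k}\equiv0$) and iterating $u_{n,j+1}=-\kappa_{j+1}^{-1}\partial_{t_{0}}u_{n,j}$ immediately yields \eqref{def1} with $\mathbf{c}_{n}\equiv-u_{n,1}$. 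Substituting this expression back, one checks that every Frobenius equation with $j\ge2$ is merely the $(j-1)$-fold $t_{0}$-derivative of the $j=1$ equation
$$(\partial_{t_{i}}-\partial_{t_{0}})\mathbf{c}_{n}=-\partial_{t_{0}}\mathbf{c}_{i},\qquad 1\le i\le n-1,$$
so that the entire PDE system collapses to these $n-1$ equations for a single unknown.

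For step (iv) we solve the last system subject to the initial data $\partial_{t_{0}}^{i-1}\mathbf{c}_{n}|_{t_{0}=\cdots=t_{n-1}=0}=(-1)^{i}[\kappa_{i}]!\,\mathbf{a}_{n,i}(t_{n})$ for $i=1,\ldots,n$, which re-encodes \eqref{wp2}, together with the polynomial degree bound $\partial_{t_{0}}^{n}\mathbf{c}_{n}=0$ coming from $u_{n,n+1}=0$. When $i=1$ the right-hand side vanishes because $\mathbf{c}_{1}=-\mathbf{a}_{1,1}(t_{1})$, so $\mathbf{c}_{n}$ depends on $(t_{0},t_{1})$ only through the sum $t_{0}+t_{1}$; propagating this observation in $t_{2},\ldots,t_{n-1}$ and substituting the inductively known $\mathbf{c}_{m}$ for $m<n$ produces \eqref{cn}. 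In this formula the second sum reproduces the Taylor expansion of the initial data along the combined variable $\tau_{n-1}=t_{0}+\cdots+t_{n-1}$, while the first double sum records the contributions of the lower-order flows through the inhomogeneity $-\partial_{t_{0}}\mathbf{c}_{m}$; the antiderivatives $(\partial_{t_{m}}^{-1})^{s}$ make this first sum vanish at $t_{0}=\cdots=t_{n-1}=0$, keeping the Taylor data intact. The main obstacle is the combinatorial bookkeeping in this last step: one must verify that the explicit expression \eqref{cn} simultaneously satisfies all $n-1$ equations $(\partial_{t_{i}}-\partial_{t_{0}})\mathbf{c}_{n}=-\partial_{t_{0}}\mathbf{c}_{i}$ and reproduces every prescribed $t_{0}$-Taylor coefficient. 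The cleanest way to handle this is an induction on $n$: differentiate \eqref{cn} termwise in $t_{i}$, invoke \eqref{cn} for $\mathbf{c}_{i}$, and match term by term, carefully tracking the factors $[\kappa_{r}]!$, $(r-s-1)!$ and the powers $(\tau_{m-1})^{r-s-1}$. Uniqueness of $\mathbf{c}_{n}$, and hence of $\mathbb{K}_{n}$, is inherited directly from Corollary~\ref{pikny}.
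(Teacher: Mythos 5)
Your outline reproduces the skeleton of the paper's own argument: reduce \eqref{p1} to the triangular scalar system for the coefficients (the paper's \eqref{wki2}, derived in Appendix~\ref{appB} with $\varepsilon=-1$; your version with $\kappa_{j+1}$ and the convention $\mathbf{u}_{i,k}=0$ for $k>i$ is the correct one), use the $t_{0}$-recursion to express everything through the single function $\mathbf{c}_{n}=-\mathbf{u}_{n,1}$, observe that the whole system collapses to the identities $(\mathbf{c}_{n})_{t_{j}}=(\mathbf{c}_{n})_{t_{0}}-(\mathbf{c}_{j})_{t_{0}}$ together with $\partial_{t_{0}}^{i}\mathbf{c}_{j}=0$ for $i\geqslant j$, and get uniqueness from linearity. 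All of that is sound and matches the paper.

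The gap is that the decisive computation is announced but never performed. The entire content of the theorem is the explicit triple-sum formula \eqref{cn}, and what must actually be proved is (a) that this specific expression satisfies $(\mathbf{c}_{n})_{t_{j}}=(\mathbf{c}_{n})_{t_{0}}-(\mathbf{c}_{j})_{t_{0}}$ for $1\leqslant j\leqslant n-1$, and (b) that $\bigl(\partial_{t_{0}}^{i-1}\mathbf{c}_{n}\bigr)(0,\ldots,0,t_{n})=(-1)^{i}[\kappa_{i}]!\,\mathbf{a}_{n,i}(t_{n})$ so that the initial data \eqref{wp2} are reproduced. You defer both: item (a) is dismissed as ``combinatorial bookkeeping'' to be handled by ``matching term by term, carefully tracking the factors,'' and item (b) is asserted rather than computed. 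In the paper, (a) is a genuinely nontrivial reindexing of the triple sum (splitting off the $m=j$ layer, shifting $s\mapsto s-1$, and recombining the boundary terms) that occupies all of Appendix~\ref{appC}, and (b) is the first half of the proof. Moreover, your step (iv) proposes to \emph{derive} \eqref{cn} by integrating along characteristics and ``propagating'' the dependence on $\tau_{m}$, which is harder than the paper's direct verification and is likewise not carried out. Without either the verification or the derivation actually executed, the proof establishes only that \emph{some} $\mathbf{c}_{n}$ with the stated properties exists and is unique, not that it is given by \eqref{cn}.
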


Here and in what follows we use the shorthand notations: 
\begin{equation}
\tau_{m} = t_{0}+t_{1}+\ldots+t_{m},\qquad[\kappa_{r}]! = \kappa_{2}\kappa_{3}\cdots\kappa_{r-1}\kappa_{r},\quad r>1, \qquad[\kappa_{1}]! = 1,\label{tm2}
\end{equation}
and 
\[
\partial_{t_{m}}^{-1}f(t_{m})\equiv\int_{0}^{t_{m}}f(t)\,dt.
\]

Notice that, the first term in \eqref{cn} disappears for $\mathbf{c}_{1}$
and $\mathbf{c}_{2}$. Also notice that, after combining \eqref{def1}
with \eqref{cn} and simplifying, one can see that $\mathbf{u}_{n,i}$
are always polynomial in variables $\kappa_{m}$. This means that
there is no issue with division by zero if one of \eqref{km}, for
some particular $\rho$, is equal to zero. 
\begin{proof}
Direct calculation yields 
\[
\mathbf{c}_{n}(0,\ldots,0,t_{n})=-\mathbf{a}_{n,1}(t_{n}).
\]
Moreover, 
\begin{align*}
(\mathbf{c}_{n})_{t_{0}} & =\sum_{m=2}^{n-1}\sum_{r=3}^{m}\sum_{s=1}^{r-2}\frac{(-1)^{r-1}[\kappa_{r}]!}{(r-s-2)!}(\tau_{m-1})^{r-s-2}\left(\partial_{t_{m}}^{-1}\right)^{s}\mathbf{a}_{m,r}(t_{m})\\
 & \quad+\sum_{r=2}^{n}\frac{(-1)^{r}[\kappa_{r}]!}{(r-2)!}(\tau_{n-1})^{r-2}\mathbf{a}_{n,r}(t_{n}),
\end{align*}
so that 
\[
\left(\partial_{t_{0}}\mathbf{c}_{n}\right)(0,\ldots,0,t_{n})=\kappa_{2}\,\mathbf{a}_{n,2}(t_{n}).
\]
Continuing differentiation of $\mathbf{c}_{n}$ with respect to $t_{0}$
we obtain 
\[
\left(\partial_{t_{0}}^{i-1}\mathbf{c}_{n}\right)(0,\ldots,0,t_{n})=(-1)^{i}[\kappa_{i}]!\,\mathbf{a}_{n,i}(t_{n}),\qquad i=2,3,\ldots,n.
\]
That means that $\mathbb{K}_{n}(0,\ldots,0,t_{n})=\bar{\mathbb{K}}_{n}(t_{n})$
and thus \eqref{Rm1} satisfies (for each $n$) the initial conditions
\eqref{wp2}. Further, \eqref{Rm1} satisfies \eqref{p1} if and only
if 
\begin{subequations}\label{wki2} 
\begin{align}
(\mathbf{u}_{n,n})_{t_{j}} & =0,\\
(\mathbf{u}_{n,i})_{t_{j}}+\kappa_{i}\mathbf{u}_{n,i+1} & =0,\qquad j\leqslant i\leqslant n-1,\\
(\mathbf{u}_{n,i})_{t_{j}}+\kappa_{i}\left(\mathbf{u}_{n,i+1}-\mathbf{u}_{j,i+1}\right) & =0,\qquad1\leqslant i\leqslant j-1,
\end{align}
\end{subequations} 
see Appendix \ref{appB} with $\epsilon=-1$.
In Appendix \ref{appC} it is showed that 
\begin{equation}
(\mathbf{c}_{n})_{t_{j}}\equiv(\mathbf{c}_{n})_{t_{0}}-(\mathbf{c}_{j})_{t_{0}},\qquad1\leqslant j\leqslant n-1,\label{cntj}
\end{equation}
and hence 
\[
\partial_{t_{0}}^{i-1}(\mathbf{c}_{n})_{t_{j}}=\partial_{t_{0}}^{i}\mathbf{c}_{n}-\partial_{t_{0}}^{i}\mathbf{c}_{j},\qquad1\leqslant i\leqslant n-1,
\]
while $\partial_{t_{0}}^{i}\mathbf{c}_{j}=0$ for $i\geqslant j$.
These properties and the fact that $[\kappa_{i+1}]!\equiv[\kappa_{i}]!\,\kappa_{i+1}$
imply that the conditions \eqref{wki2} are identically satisfied.
So, all $\mathbb{K}_{n}$ given by \eqref{Rm1} satisfy the IVP \eqref{p}
with \eqref{p2} being of the particular form \eqref{wp2}. Since the conditions \eqref{wki2}, being linear equations, have a unique solution for any initial conditions, the solutions \eqref{Rm1} and \eqref{D1} must -- for the chosen initial conditions -- coincide. 
\end{proof}
The first few non-autonomous vectors \eqref{Rm1} for the general
IVP \eqref{wp2} have the form 
\begin{align*}
\mathbb{K}_{0} & =\sigma_{-1},\\
\mathbb{K}_{1} & =\sigma_{-1}+\mathbf{a}_{1,1}(t_{1})\,K_{1},\\
\mathbb{K}_{2} & =\sigma_{-1}+\Bigl[\mathbf{a}_{2,1}(t_{2})-\kappa_{2}\,\tau_{1}\,\mathbf{a}_{2,2}(t_{2})\Bigr]K_{1}+\mathbf{a}_{2,2}(t_{2})\,K_{2},\\
\mathbb{K}_{3} & =\sigma_{-1}+\left[\kappa_{2}\,\partial_{t_{2}}^{-1}\mathbf{a}_{2,2}(t_{2})+\mathbf{a}_{3,1}(t_{3})-\kappa_{2}\,\tau_{2}\,\mathbf{a}_{3,2}(t_{3})+\frac{1}{2}\kappa_{2}\kappa_{3}\,\tau_{2}^{2}\,\mathbf{a}_{3,3}(t_{3})\right]K_{1}\\
 & \quad+\Bigl[\mathbf{a}_{3,2}(t_{3})-\kappa_{3}\,\tau_{2}\,\mathbf{a}_{3,3}(t_{3})\Bigr]K_{2}+\mathbf{a}_{3,3}(t_{3})\,K_{3},\\
\mathbb{K}_{4} & =\sigma_{-1}+\biggl[\kappa_{2}\,\partial_{t_{2}}^{-1}\mathbf{a}_{2,2}(t_{2})+\kappa_{2}\,\partial_{t_{3}}^{-1}\mathbf{a}_{3,2}(t_{3})-\kappa_{2}\kappa_{3}\,\tau_{2}\,\partial_{t_{3}}^{-1}\mathbf{a}_{3,3}(t_{3})-\kappa_{2}\kappa_{3}\,\partial_{t_{3}}^{-2}\mathbf{a}_{3,3}(t_{3})\\
 & \qquad\qquad+\mathbf{a}_{4,1}(t_{4})-\kappa_{2}\,\tau_{3}\,\mathbf{a}_{4,2}(t_{4})+\frac{1}{2}\kappa_{2}\kappa_{3}\,\tau_{3}^{2}\,\mathbf{a}_{4,3}(t_{4})-\frac{1}{6}\kappa_{2}\kappa_{3}\kappa_{4}\,\tau_{3}^{3}\,\mathbf{a}_{4,4}(t_{4})\biggr]K_{1}\\
 & \quad+\left[\kappa_{3}\,\partial_{t_{3}}^{-1}\mathbf{a}_{3,3}(t_{3})+\mathbf{a}_{4,2}(t_{4})-\kappa_{3}\,\tau_{3}\,\mathbf{a}_{4,3}(t_{4})+\frac{1}{2}\kappa_{3}\kappa_{4}\,\tau_{3}^{2}\,\mathbf{a}_{4,4}(t_{4})\right]K_{2}\\
 & \quad+\Bigl[\mathbf{a}_{4,3}(t_{4})-\kappa_{4}\,\tau_{3}\,\mathbf{a}_{4,4}(t_{4})\Bigr]K_{3}+\mathbf{a}_{4,4}(t_{4})\,K_{4},
\end{align*}
where $\kappa_{m}$ are defined by \eqref{km} and $\tau_{m}$ by
\eqref{tm2}. 
\begin{example}
\label{E2}Suppose that the functions $\mathbf{a}_{n,i}\left(t_{n}\right)$
in the initial conditions \eqref{wp2} are given by the simple choice
$\mathbf{a}_{n,i}\left(t_{n}\right)=0$ for $i=1,\ldots,n-1$, and
$\mathbf{a}_{n,n}\left(t_{n}\right)=t_{n}$. Then the initial conditions
\eqref{wp2} have the form 
\begin{equation}
\bar{\mathbb{K}}_{0}(t_{0})\equiv\sigma_{-1},\qquad\bar{\mathbb{K}}_{n}(t_{n})\equiv\sigma_{-1}+t_{n}\,K_{i},\qquad n=1,2,...,\label{inc1}
\end{equation}
and the solution of the IVP \eqref{p} is given by \eqref{Rm1} and \eqref{def1} with $\mathbf{c}_{n}$ in the form
\begin{equation}
\mathbf{c}_{n}(t_{0},\ldots,t_{n})=\sum_{m=2}^{n-1}\sum_{s=2}^{m}\frac{(-1)^{m-1}[\kappa_{m}]!}{(m-s)!s!}(\tau_{m-1})^{m-s}t_{m}^{s}+\frac{(-1)^{n}[\kappa_{n}]!}{(n-1)!}(\tau_{n-1})^{n-1}t_{n}.\label{def12}
\end{equation}
Then, using \eqref{def1} with \eqref{def12} we obtain the first
$\mathbb{K}_{n}$ in \eqref{Rm1} in the explicit form 
\begin{align*}
\mathbb{K}_{0} & =\sigma_{-1},\\
\mathbb{K}_{1} & =\sigma_{-1}+t_{1}K_{1},\\
\mathbb{K}_{2} & =\sigma_{-1}-\kappa_{2}(t_{0}+t_{1})t_{2}K_{1}+t_{2}K_{2},\\
\mathbb{K}_{3} & =\sigma_{-1}+\frac{1}{2}\kappa_{2}\bigl[t_{2}^{2}+\kappa_{3}(t_{0}+t_{1}+t_{2})^{2}t_{3}\bigr]K_{1}-\kappa_{3}(t_{0}+t_{1}+t_{2})t_{3}K_{2}+t_{3}K_{3},\\
\mathbb{K}_{4} & =\sigma_{-1}+\frac{1}{2}\kappa_{2}\Bigl[t_{2}^{2}-\frac{1}{3}\kappa_{3}(3t_{0}+3t_{1}+3t_{2}+t_{3})t_{3}^{2}-\frac{1}{3}\kappa_{3}\kappa_{4}\left(t_{0}+t_{1}+t_{2}+t_{3}\right)^{3}t_{4}\Bigr]K_{1}\\
 & \quad+\frac{1}{2}\kappa_{3}\bigl[t_{3}^{2}+\kappa_{4}(t_{0}+t_{1}+t_{2}+t_{3})^{2}t_{4}\bigr]K_{2}-\kappa_{4}(t_{0}+t_{1}+t_{2}+t_{3})t_{4}K_{3}+t_{4}K_{4}.
\end{align*}
\end{example}

\subsection{Frobenius integrability in the hereditary subalgebra $\mathcal{A}_{0}$}

The second possibility arises when the initial conditions are given
by $\mathcal{A}_{0}$-valued functions. 
\begin{thm}
\label{i} Consider the IVP \eqref{p} with the initial conditions \eqref{p2}
in the form 
\begin{equation}
\mathbb{K}_{n}(0,\ldots,0,t_{n})=\bar{\mathbb{K}}_{n}(t_{n})\equiv\sigma_{0}+\sum_{i=1}^{n}\mathbf{a}_{n,i}\left(t_{n}\right)K_{i}\text{,\ensuremath{\qquad}}n=0,1,2,...,\label{wp}
\end{equation}
where $\mathbf{a}_{n,i}(t_{n})$ are arbitrary differentiable functions (thus $\bar{\mathbb{K}}_{n}(t_{n})$
is an $\mathcal{A}_{0}$-valued function). Then, the solution \eqref{D1} is unique and
attains the form 
\begin{equation}
\mathbb{K}_{n}=\sigma_{0}+\sum_{i=1}^{n}\mathbf{u}_{n,i}(t_{0},\ldots,t_{n})K_{i},\label{R0}
\end{equation}
where 
\begin{equation}
\mathbf{u}_{n,i}(t_{0},\ldots,t_{n})=\sum_{r=i}^{n-1}\mathbf{c}_{r,i}\left(t_{r}\right)e^{-\kappa_{i}\tau_{r-1}}+\mathbf{a}_{n,i}\left(t_{n}\right)e^{-\kappa_{i}\tau_{n-1}},\label{def}
\end{equation}
$\tau_{m}$ are again given by \eqref{tm2} 
and where $\mathbf{c}_{r,i}(t_{r})$ are functions that satisfy
non-homogeneous linear IVPs
\begin{equation}
\mathbf{c}_{r,i}^{\prime}(t_{r})+\kappa_{i}\mathbf{c}_{r,i}(t_{r})=\kappa_{i}\mathbf{a}_{r,i}(t_{r}),\qquad\mathbf{c}_{r,i}(0)=0.\label{c}
\end{equation}
\end{thm}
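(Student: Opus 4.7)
The plan is to mirror the proof strategy used for Theorem~\ref{ii}: reduce the Lie-algebra IVP to a system of scalar equations for the coefficient functions $\mathbf{u}_{n,i}$, then verify the proposed closed form \eqref{def} satisfies both the system and the initial conditions, and finally invoke uniqueness of linear IVPs to conclude the solution agrees with \eqref{D1}.

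First I would substitute the ansatz \eqref{R0} into the Frobenius condition \eqref{p1}. Using the commutation relations \eqref{A} with $\sigma_{0}$ (so $[\sigma_{0},K_{m}]=\kappa_{m}K_{m}$) and the fact that all $K_{m}$ commute, one computes
\[
[\mathbb{K}_{j},\mathbb{K}_{n}]=\sum_{i=1}^{n}\kappa_{i}\bigl(\mathbf{u}_{n,i}-\mathbf{u}_{j,i}\bigr)K_{i},
\]
where we understand $\mathbf{u}_{j,i}=0$ for $i>j$. Comparing coefficients of each $K_{i}$ in \eqref{p1}, the IVP reduces to the scalar system
\begin{align*}
(\mathbf{u}_{n,i})_{t_{j}}+\kappa_{i}\mathbf{u}_{n,i} & =0,\qquad 0\leqslant j<i\leqslant n,\\
(\mathbf{u}_{n,i})_{t_{j}}+\kappa_{i}\bigl(\mathbf{u}_{n,i}-\mathbf{u}_{j,i}\bigr) & =0,\qquad 1\leqslant i\leqslant j<n,
\end{align*}
together with the initial data $\mathbf{u}_{n,i}(0,\ldots,0,t_{n})=\mathbf{a}_{n,i}(t_{n})$. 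This derivation is analogous to Appendix~\ref{appB} and should be relegated there with $\epsilon=0$.

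Next I would verify that \eqref{def} satisfies this system. The initial-value check is immediate: at $t_{0}=\cdots=t_{n-1}=0$ all $\tau_{r-1}$ vanish and $\mathbf{c}_{r,i}(0)=0$ by \eqref{c}, so only the last term survives and yields $\mathbf{a}_{n,i}(t_{n})$. For the first set of equations ($j<i$), note that in \eqref{def} no summand carries a $t_{j}$-dependence except through the exponentials $e^{-\kappa_{i}\tau_{r-1}}$ with $r\geqslant i>j$; each such exponential satisfies $\partial_{t_{j}}e^{-\kappa_{i}\tau_{r-1}}=-\kappa_{i}e^{-\kappa_{i}\tau_{r-1}}$, giving the required identity at once. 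The substantive computation is the case $i\leqslant j<n$, which is where I expect the main obstacle to lie: one must split the sum defining $\mathbf{u}_{n,i}$ at the index $r=j$, observing that terms with $r<j$ are $t_{j}$-independent, the $r=j$ term contributes $\mathbf{c}_{j,i}'(t_{j})e^{-\kappa_{i}\tau_{j-1}}$ (since $\tau_{j-1}$ does not contain $t_{j}$), and terms with $r>j$ together with the $\mathbf{a}_{n,i}(t_{n})$ term pick up a factor $-\kappa_{i}$. Combined with $\kappa_{i}(\mathbf{u}_{n,i}-\mathbf{u}_{j,i})$, the terms for $r\neq j$ cancel cleanly, and what remains is
\[
\bigl[\mathbf{c}_{j,i}'(t_{j})+\kappa_{i}\mathbf{c}_{j,i}(t_{j})-\kappa_{i}\mathbf{a}_{j,i}(t_{j})\bigr]e^{-\kappa_{i}\tau_{j-1}},
\]
which vanishes precisely by the defining ODE \eqref{c}.

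Finally, uniqueness follows because the scalar system is a linear first-order system in each $t_{j}$ with prescribed data, so its solution is unique; hence the expression \eqref{def} necessarily coincides with the formal solution produced by \eqref{D1}. The functions $\mathbf{c}_{r,i}$ themselves can be written explicitly as $\mathbf{c}_{r,i}(t_{r})=\kappa_{i}e^{-\kappa_{i}t_{r}}\int_{0}^{t_{r}}e^{\kappa_{i}s}\mathbf{a}_{r,i}(s)\,ds$, but this is not needed for the proof.
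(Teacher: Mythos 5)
Your proposal is correct and follows essentially the same route as the paper's own proof: reduce \eqref{p1} to the scalar system \eqref{ab} (derived in Appendix~\ref{appB} with $\epsilon=0$), check the initial data using $\mathbf{c}_{r,i}(0)=0$, verify the two cases $j<i$ and $i\leqslant j$ by splitting the sum in \eqref{def} at $r=j$ so that the residual term vanishes by the ODE \eqref{c}, and conclude by uniqueness of solutions of the linear system. No gaps; the argument matches the paper's in both structure and detail.
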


Note that the solution of \eqref{c} is 
\begin{equation}
\mathbf{c}_{r,i}(t_{r})=\kappa_{i}\,e^{-\kappa_{i}t_{r}}\int_{0}^{t_{r}}\mathbf{a}_{r,i}(t)e^{\kappa_{i}t}\,dt.\label{c2}
\end{equation}

In particular, for the choice $\mathbf{a}_{r,i}(t_{i})=t_{i}^{m}$,
$m=0,1,\ldots$, 
\begin{equation}
\mathbf{c}_{r,i}(t_{r})=m!\left(\sum_{k=0}^{m}\frac{(-\kappa_{i})^{k-m}}{k!}t_{r}^{k}-e^{-\kappa_{i}t_{r}}\right),\qquad\kappa_{i}\neq0.\label{crsw}
\end{equation}

\begin{proof}
Fix $n\in\mathbb{N}$ and assume that the solution $\mathbb{K}_{r}$
given by \eqref{R0} satisfy the IVP \eqref{p} for all $r<n$. Clearly,
$\mathbb{K}_{n}(0,\ldots,0,t_{n})=\bar{\mathbb{K}}_{n}(t_{n})$ so
\eqref{R0} satisfy (for each $n$) the initial conditions \eqref{wp}.
Further, \eqref{R0} satisfies \eqref{p1} if and only if 
\begin{subequations}\label{ab}
\begin{align}
(\mathbf{u}_{n,i})_{t_{j}}+\kappa_{i}\mathbf{u}_{n,i} & =0,\qquad j+1\leqslant i\leqslant n,\label{a}\\
(\mathbf{u}_{n,i})_{t_{j}}+\kappa_{i}\left(\mathbf{u}_{n,i}-\mathbf{u}_{j,i}\right) & =0,\qquad1\leqslant i\leqslant j.\label{b}
\end{align}
\end{subequations} 
For proof of \eqref{ab} see Appendix \ref{appB}
with $\epsilon=0$. Note that the equations \eqref{ab} are linear
and thus posses solutions for all $t_{j}$. Consider \eqref{def},
then for $j+1\leqslant i\leqslant n$ we have 
\begin{align*}
(\mathbf{u}_{n,i})_{t_{j}} & =-\kappa_{i}\sum_{r=i}^{n-1}\mathbf{c}_{r,i}\left(t_{r}\right)e^{-\kappa_{i}\tau_{r-1}}-\kappa_{i}\mathbf{a}_{n,i}\left(t_{n}\right)e^{-\kappa_{i}\tau_{n-1}}\\
 & \equiv-\kappa_{i}\mathbf{u}_{n,i},
\end{align*}
so that \eqref{a} is identically true, while for $1\leqslant i\leqslant j$
\begin{align*}
(\mathbf{u}_{n,i})_{t_{j}} & =\mathbf{c}_{j,i}^{\prime}\left(t_{j}\right)e^{-\kappa_{i}\tau_{j-1}}-\kappa_{i}\sum_{r=j+1}^{n-1}\mathbf{c}_{r,i}\left(t_{r}\right)e^{-\kappa_{i}\tau_{r-1}}-\kappa_{i}\mathbf{a}_{n,i}\left(t_{n}\right)e^{-\kappa_{i}\tau_{n-1}}
\end{align*}
and thus 
\[
(\mathbf{u}_{n,i})_{t_{j}}+\kappa_{i}\left(\mathbf{u}_{n,i}-\mathbf{u}_{j,i}\right)=\left(\mathbf{c}_{j,i}^{\prime}\left(t_{j}\right)+\kappa_{i}\left[\mathbf{c}_{j,i}\left(t_{j}\right)-\mathbf{a}_{j,i}\left(t_{j}\right)\right]\right)e^{-\kappa_{i}\tau_{j-1}},
\]
so that \eqref{b} is satisfied provided that the differential equations
\eqref{c} hold. As result, all \eqref{R0} satisfy the IVP \eqref{p}
with \eqref{p2} being of the particular form \eqref{wp} and so,  by
uniqueness of solutions of \eqref{ab} for any initial conditions, the solutions \eqref{R0} and \eqref{D1} must -- for the chosen initial
conditions -- coincide. 
\end{proof}
The first few non-autonomous vectors \eqref{R0} for the general IVP
\eqref{wp} have the form 
\begin{align*}
\mathbb{K}_{0} & =\sigma_{0},\\
\mathbb{K}_{1} & =\sigma_{0}+\mathbf{a}_{1,1}(t_{1})e^{-\kappa_{1}\tau_{0}}K_{1},\\
\mathbb{K}_{2} & =\sigma_{0}+\Bigl[\mathbf{c}_{1,1}(t_{1})e^{-\kappa_{1}\tau_{0}}+\mathbf{a}_{2,1}(t_{2})e^{-\kappa_{1}\tau_{1}}\Bigr]K_{1}+\mathbf{a}_{2,2}(t_{2})e^{-\kappa_{2}\tau_{1}}K_{2},\\
\mathbb{K}_{3} & =\sigma_{0}+\Bigl[\mathbf{c}_{1,1}(t_{1})e^{-\kappa_{1}\tau_{0}}+\mathbf{c}_{2,1}(t_{2})e^{-\kappa_{1}\tau_{1}}+\mathbf{a}_{3,1}(t_{3})e^{-\kappa_{1}\tau_{2}}\Bigr]K_{1}\\
 & \quad+\Bigl[e^{-\kappa_{2}\tau_{1}}\mathbf{c}_{2,2}(t_{2})+\mathbf{a}_{3,2}(t_{3})e^{-\kappa_{2}\tau_{2}}\Bigr]K_{2}+\mathbf{a}_{3,3}(t_{3})e^{-\kappa_{3}\tau_{2}}K_{3},\\
\mathbb{K}_{4} & =\sigma_{0}+\Bigl[\mathbf{c}_{1,1}(t_{1})e^{-\kappa_{1}\tau_{0}}+\mathbf{c}_{2,1}(t_{2})e^{-\kappa_{1}\tau_{1}}+\mathbf{c}_{3,1}(t_{3})e^{-\kappa_{1}\tau_{2}}+\mathbf{a}_{4,1}(t_{4})e^{-\kappa_{1}\tau_{3}}\Bigr]K_{1}\\
 & \quad+\Bigl[\mathbf{c}_{2,2}(t_{2})e^{-\kappa_{2}\tau_{1}}+\mathbf{c}_{3,2}(t_{3})e^{-\kappa_{2}\tau_{2}}+\mathbf{a}_{4,2}(t_{4})e^{-\kappa_{2}\tau_{3}}\Bigr]K_{2}\\
 & \quad+\Bigl[\mathbf{c}_{3,3}(t_{3})e^{-\kappa_{3}\tau_{2}}+\mathbf{a}_{4,3}(t_{4})e^{-\kappa_{3}\tau_{3}}\Bigr]K_{3}+\mathbf{a}_{4,4}(t_{4})e^{-\kappa_{4}\tau_{3}}K_{4},
\end{align*}
where $\kappa_{m}$ are defined by \eqref{km}, $\tau_{m}$ by \eqref{tm2}
and $\mathbf{c}_{r,i}(t_{r})$ are given by \eqref{c2}. 
\begin{example}
\label{E0}Suppose that the functions $\mathbf{a}_{n,i}\left(t_{n}\right)$
in the initial conditions \eqref{wp} are given by $\mathbf{a}_{n,i}(t_{n})=0$
for $i=1,\ldots,n-1$ and $\mathbf{a}_{n,n}(t_{n})=t_{n}$, so that 
\begin{equation}
\bar{\mathbb{K}}_{0}(t_{0})\equiv\sigma_{-1},\qquad\bar{\mathbb{K}}_{n}(t_{n})\equiv\sigma_{-1}+t_{n}\,K_{i},\qquad n=1,2,...\,.\label{inc0}
\end{equation}
In this case, by \eqref{c2} or \eqref{crsw} 
\begin{equation}
\mathbf{c}_{n,i}(t_{n})=0,\qquad i=1,\ldots,n-1,\qquad\bm{c}_{n,n}(t_{n})=t_{n}+\frac{1}{\kappa_{n}}\left(e^{-\kappa_{n}t_{n}}-1\right),\qquad\kappa_{n}\neq0.\label{cni}
\end{equation}
Then, if $\kappa_{i}\neq0$, the first $\mathbb{K}_{n}$ in \eqref{R0}
have the form 
\begin{align*}
\mathbb{K}_{0} & =\sigma_{0},\\
\mathbb{K}_{1} & =\sigma_{0}+t_{1}e^{-\kappa_{1}t_{0}}K_{1},\\
\mathbb{K}_{2} & =\sigma_{0}+\Bigl[\Bigl(t_{1}-\frac{1}{\kappa_{1}}\Bigr)e^{-\kappa_{1}t_{0}}+\frac{1}{\kappa_{1}}e^{-\kappa_{1}(t_{0}+t_{1})}\Bigr]K_{1}+t_{2}e^{-\kappa_{2}(t_{0}+t_{1})}K_{2},\\
\mathbb{K}_{3} & =\sigma_{0}+\Bigl[\Bigl(t_{1}-\frac{1}{\kappa_{1}}\Bigr)e^{-\kappa_{1}t_{0}}+\frac{1}{\kappa_{1}}e^{-\kappa_{1}(t_{0}+t_{1})}\Bigr]K_{1}\\
 & \quad+\Bigl[\Bigl(t_{2}-\frac{1}{\kappa_{2}}\Bigr)e^{-\kappa_{2}(t_{0}+t_{1})}+\frac{1}{\kappa_{2}}e^{-\kappa_{2}(t_{0}+t_{1}+t_{2})}\Bigr]K_{2}+t_{3}e^{-\kappa_{3}(t_{0}+t_{1}+t_{2})}K_{3},\\
\mathbb{K}_{4} & =\sigma_{0}+\Bigl[\Bigl(t_{1}-\frac{1}{\kappa_{1}}\Bigr)e^{-\kappa_{1}t_{0}}+\frac{1}{\kappa_{1}}e^{-\kappa_{1}(t_{0}+t_{1})}\Bigr]K_{1}\\
 & \quad+\Bigl[\Bigl(t_{2}-\frac{1}{\kappa_{2}}\Bigr)e^{-\kappa_{2}(t_{0}+t_{1})}+\frac{1}{\kappa_{2}}e^{-\kappa_{2}(t_{0}+t_{1}+t_{2})}\Bigr]K_{2}\\
 & \quad+\Bigl[\Bigl(t_{3}-\frac{1}{\kappa_{3}}\Bigr)e^{-\kappa_{3}(t_{0}+t_{1}+t_{2})}+\frac{1}{\kappa_{3}}e^{-\kappa_{3}(t_{0}+t_{1}+t_{2}+t_{3})}\Bigr]K_{3}+t_{4}e^{-\kappa_{4}(t_{0}+t_{1}+t_{2}+t_{3})}K_{4}.
\end{align*}
\end{example}

\section{Non-autonomous soliton hierarchies and their deformed isospectral
zero-curvature representations}

\label{s4}

From now on we will assume that the algebraic objects like $K$, $\mathbb{K}$,
or $\sigma$ are vector fields on some infinite-dimensional manifold
$\mathcal{M}$ with the corresponding autonomous evolution equations
$u_{t}=K[u]$ and non-autonomous evolution equations $u_{t}=\mathbb{K}\left(x,t,[u]\right)$,
where the square bracket denotes the dependence on $u$ and a finite
number of derivatives of $u$ w.r.t.~$x$ (so $[u]$ denotes jet-coordinates
on $\mathcal{M}$) and where $u=(u_{1}(x),\ldots,u_{N}(x))^{T}$ denotes
points on the manifold $\mathcal{M}$.

Consider thus an infinite hierarchy of mutually commuting autonomous
evolutionary equations on $\mathcal{M}$ of the form 
\begin{equation}
u_{s_{n}}=K_{n}[u],\qquad n=1,2\ldots,\label{h}
\end{equation}
as well as a hierarchy of non-commuting evolutionary equations on
$\mathcal{M}$:
\begin{equation}
u_{\tau_{n}}=\sigma_{n}[u],\qquad n=-1,0,1,\ldots,\label{hs}
\end{equation}
such that the commutation relations \eqref{A} are valid. The members
of the hierarchy \eqref{hs} are called master symmetries for \eqref{h}.

In this section we obtain -- under Assumption \ref{A1} and Assumption
\ref{A2} -- the Frobenius integrable non-autonomous hierarchies
$u_{t_{n}}=\mathbb{K}_{n}[u]$, where $\mathbb{K}_{n}$ is of the
form \eqref{Rm1} or \eqref{R0}, from an appropriate deformation
of an isospectral zero-curvature representation of \eqref{h} by a
non-standard (see Remark \ref{non-standard}) isospectral zero-curve
representation of \eqref{hs}.

\begin{asm} \label{A1} Suppose that the commuting hierarchy \eqref{h}
can be obtained from the isospectral linear problem 
\begin{equation}
\begin{cases}
\Psi_{x}=L\Psi,\\
\Psi_{s_{i}}=U_{i}\Psi, & i=1,2,\ldots,
\end{cases}\label{4.1}
\end{equation}
where $L=L(\lambda,u)$, $U_{i}=U_{i}(\lambda,[u])$ are some $2\times2$
matrices depending on $[u]$ and the auxiliary variable $\lambda$,
s.t.~$\lambda_{s_{i}}=0$ for all $i$. \end{asm}

The subscript $s_{i}$ denotes the total derivative with
respect to the evolution parameter $s_{i}$. The compatibility condition,
that is the condition for existence of a common multi-time solution
$\Psi(x,s_{1},s_{2},\ldots)$, for the problem \eqref{4.1} is 
\begin{subequations}
\label{4.2ab} 
\begin{align}
(\Psi_{x})_{s_{i}} & =(\Psi_{s_{i}})_{x},\qquad i=1,2,\ldots,\label{4.2a}\\
(\Psi_{s_{i}})_{s_{j}} & =(\Psi_{s_{j}})_{s_{i}},\qquad i,j=1,2\ldots\,.\label{4.2b}
\end{align}
\end{subequations}

The condition \eqref{4.2a} is equivalent to 
\begin{equation}
L_{s_{i}}=\left[U_{i},L\right]+\left(U_{i}\right)_{x}\equiv L^{\prime}\left[K_{i}\right],\qquad i=1,2,\ldots\,.\label{4.3a}
\end{equation}
Throughout the whole article $\Omega^{\prime}\left[K\right]$ denotes
the directional derivative of the tensor field $\Omega$ along the
vector field $K$ on $\mathcal{M}$. The identity in \eqref{4.3a}
is the consequence of Assumption \ref{A1}, while the condition \eqref{4.2b}
is equivalent to 
\begin{equation}
\left(U_{i}\right)_{s_{j}}-\left(U_{j}\right)_{s_{i}}+[U_{i},U_{j}]=0,\qquad i,j=1,2,\ldots.\label{4.3b}
\end{equation}
Thus, Assumption \ref{A1} means that \eqref{4.3a} is equivalent
with the corresponding equation $u_{s_{i}}=K_{i}[u]$ in \eqref{h},
i.e. \eqref{4.3a} is an isospectral zero-curvature representation
for \eqref{h}. It also means that $\Psi_{s_{i}}=\mathcal{L}_{K_{i}}\Psi$
where $\mathcal{L}$ is the Lie derivative on $\mathcal{M}$. Then,
the equation \eqref{4.3b} guarantees that all $K_{i}$ commute, since
\[
(\Psi_{s_{i}})_{s_{j}}-(\Psi_{s_{j}})_{s_{i}}=\mathcal{L}_{K_{j}}\mathcal{L}_{K_{i}}\Psi-\mathcal{L}_{K_{i}}\mathcal{L}_{K_{j}}\Psi=\mathcal{L}_{[K_{j},K_{i}]}\Psi=\Psi^{\prime}\bigl[[K_{j},K_{i}]\bigr]=0.
\]
Note also that \eqref{4.3b} can be written as 
\begin{equation}
U_{i}^{\prime}[K_{j}]-U_{j}^{\prime}[K_{i}]+[U_{i},U_{j}]=0,\qquad i,j=1,2,\ldots\,.\label{ZCU}
\end{equation}

\begin{asm} \label{A2} Suppose also that the (non-commuting) hierarchy
\eqref{hs} of master symmetries can be obtained from the following
\emph{deformed} linear isospectral problem 
\begin{equation}
\begin{cases}
\Psi_{x}=L\Psi,\\
\Psi_{\tau_{i}}=V_{i}\Psi-\lambda^{i+1}\Psi_{\lambda}, & i=-1,0,1\ldots,
\end{cases}\label{4.4}
\end{equation}
where $L=L(\lambda,u)$ is the same $L$ as in \eqref{4.1} while
$V_{i}=V_{i}(\lambda,[u])$ are some matrices depending on $[u]$
and $\lambda$ such that $\lambda_{\tau_{i}}=0$. \end{asm}

In \eqref{4.4} $\Psi_{\lambda}\equiv\frac{\partial\Psi}{\partial\lambda}$.
Obviously, we cannot expect that \eqref{4.4} posses a common multi-time
solution $\Psi(x,\tau_{-1},\tau_{0},\tau_{1},\ldots)$. Instead, the
Assumption \ref{A2} means that \eqref{4.4} has, for each $i$, a
solution $\Psi(x,\tau_{i})$ so that 
\[
(\Psi_{x})_{\tau_{i}}=(\Psi_{\tau_{i}})_{x},\qquad i=-1,0,\ldots,
\]
which is equivalent to 
\begin{equation}
L_{\tau_{i}}=\left[V_{i},L\right]+\left(V_{i}\right)_{x}-\lambda^{i+1}L_{\lambda}\equiv L^{\prime}\left[\sigma_{i}\right],\qquad i=-1,0,1,\ldots,\label{4.6}
\end{equation}
and the identity in \eqref{4.6} is the consequence of Assumption
\ref{A2}. This assumption means thus that each equation in \eqref{4.6}
is equivalent with the corresponding equation $u_{\tau_{i}}=\sigma_{i}[u]$
in \eqref{hs}. Since the fields $\sigma_{i}$ in \eqref{hs} do not
commute we clearly cannot expect that $(\Psi_{\tau_{i}})_{\tau_{j}}=(\Psi_{\tau_{j}})_{\tau_{i}}$.
Instead we have 
\[
(\Psi_{\tau_{i}})_{\tau_{j}}-(\Psi_{\tau_{j}})_{\tau_{i}}=\Psi^{\prime}\bigl[[\sigma_{j},\sigma_{i}]\bigr]=(i-j)\Psi^{\prime}[\sigma_{i+j}]=(i-j)\Psi_{\tau_{i+j}},\qquad i,j=-1,0,1,\ldots,
\]
which is equivalent to 
\[
(V_{i})_{\tau_{j}}-(V_{j})_{\tau_{i}}+[V_{i},V_{j}]+\lambda^{j+1}(V_{i})_{\lambda}-\lambda^{i+1}(V_{j})_{\lambda}=(i-j)V_{i+j},\qquad i,j=-1,0,\ldots,
\]
that is to 
\begin{equation}
V_{i}^{\prime}[\sigma_{j}]-V_{j}^{\prime}[\sigma_{i}]+[V_{i},V_{j}]+\lambda^{j+1}(V_{i})_{\lambda}-\lambda^{i+1}(V_{j})_{\lambda}=(i-j)V_{i+j},\qquad i,j=-1,0,\ldots\,.\label{numer}
\end{equation}

\begin{rem}
\label{non-standard} Usually in literature (see for example \cite{Ma1996})
one constructs a zero-curvature representation for \eqref{hs}, from
the \emph{non-isospectral} problem 
\[
\begin{cases}
\Psi_{x}=L\Psi,\\
\Psi_{\tau_{i}}=V_{i}\Psi, & i=1,2,\ldots,
\end{cases}
\]
with $\lambda_{\tau_{i}}=\lambda^{i+1}$. The isospectral problem
\eqref{4.4} is however equivalent (in the sense that it leads to
the same zero-curvature equations \eqref{numer}) with the above isospectral
problem, while being better adapted to our needs. 
\end{rem}

We will now construct an isospectral zero-curvature representation
of the hierarchies 
\[
u_{t_{n}}=\mathbb{K}_{n}[u],
\]
with $\mathbb{K}_{n}$ given in \eqref{Rm1} or in \eqref{R0} by
combining the isospectral problems \eqref{4.1} and \eqref{4.4}.
Consider thus the \emph{deformed} isospectral linear problem 
\begin{equation}
\begin{cases}
\Psi_{x}=L\Psi,\\
\Psi_{t_{n}}=W_{n}\Psi-\lambda^{\varepsilon+1}\Psi_{\lambda}, & n=1,2,\ldots,
\end{cases}\label{4.8}
\end{equation}
with $\varepsilon=-1$ or $\varepsilon=0$ and where $\lambda_{t_{n}}=0$,
with $W_{n}=W_{n}(\lambda,[u])$ defined as 
\begin{equation}
W_{n}=V_{\varepsilon}+\sum_{i=1}^{n}\mathbf{v}_{n,i}(t_{0},\ldots,t_{n})U_{i},\label{Wn}
\end{equation}
where $L,V_{\varepsilon},U_{i}$ are given as above in this section
and where $\mathbf{v}_{n,i}$ are so far undetermined functions of
evolution parameters $t_{0},\ldots,t_{n}$. 
\begin{thm}
The compatibility condition $(\Psi_{x})_{t_{n}}=(\Psi_{t_{n}})_{x}$
for \eqref{4.8} has the form 
\begin{equation}
L_{t_{n}}=\left[W_{n},L\right]+\left(W_{n}\right)_{x}-\lambda^{\varepsilon+1}L_{\lambda}\equiv L^{\prime}\left[u_{t_{n}}\right],\qquad n=1,2,\ldots,\label{LaxKn}
\end{equation}
where 
\begin{equation}
u_{t_{n}}=\mathbb{K}_{n}\equiv\sigma_{\varepsilon}+\sum_{i=1}^{n}\mathbf{v}_{n,i}(t_{0},\ldots,t_{n})K_{i}[u],\qquad n=1,2,\ldots\,.\label{hd}
\end{equation}
\end{thm}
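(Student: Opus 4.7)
The plan is to verify the compatibility condition $(\Psi_x)_{t_n}=(\Psi_{t_n})_x$ for \eqref{4.8} by direct differentiation, then separately identify the resulting matrix equation with the vector field flow $u_{t_n}=\mathbb{K}_n$ using Assumptions \ref{A1} and \ref{A2}.

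First I would compute both sides of the cross-derivative identity. From $\Psi_x=L\Psi$ I get
\[
(\Psi_x)_{t_n} = L_{t_n}\Psi + L\bigl(W_n\Psi - \lambda^{\varepsilon+1}\Psi_\lambda\bigr),
\]
while from $\Psi_{t_n}=W_n\Psi - \lambda^{\varepsilon+1}\Psi_\lambda$, using $\lambda_x=0$ and $(\Psi_x)_\lambda = L_\lambda\Psi+L\Psi_\lambda$, I get
\[
(\Psi_{t_n})_x = (W_n)_x\Psi + W_n L\Psi - \lambda^{\varepsilon+1}L_\lambda\Psi - \lambda^{\varepsilon+1}L\Psi_\lambda.
\]
The $\lambda^{\varepsilon+1}L\Psi_\lambda$ terms cancel, and what remains, valid for arbitrary $\Psi$, collapses to the matrix identity
\[
L_{t_n} = [W_n,L] + (W_n)_x - \lambda^{\varepsilon+1}L_\lambda,
\]
which is the first equality in \eqref{LaxKn}.

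Next I would establish the second equality. Since the coefficients $\mathbf{v}_{n,i}(t_0,\ldots,t_n)$ depend only on the evolution parameters and not on $x$ or on $[u]$, the decomposition $W_n = V_\varepsilon + \sum_{i=1}^n \mathbf{v}_{n,i}U_i$ is linear with constant (in $x$ and in $u$) coefficients. Therefore
\[
[W_n,L] + (W_n)_x - \lambda^{\varepsilon+1}L_\lambda = \bigl([V_\varepsilon,L]+(V_\varepsilon)_x - \lambda^{\varepsilon+1}L_\lambda\bigr) + \sum_{i=1}^n \mathbf{v}_{n,i}\bigl([U_i,L]+(U_i)_x\bigr).
\]
Now I would invoke the identities built into the two assumptions: equation \eqref{4.3a} gives $[U_i,L]+(U_i)_x = L'[K_i]$, and equation \eqref{4.6} gives $[V_\varepsilon,L]+(V_\varepsilon)_x - \lambda^{\varepsilon+1}L_\lambda = L'[\sigma_\varepsilon]$. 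Combining these and using $\mathbb{R}$-linearity of the directional derivative $L'[\cdot]$ in its argument (again valid because the $\mathbf{v}_{n,i}$ carry no $u$-dependence), the right-hand side assembles into
\[
L'\Bigl[\sigma_\varepsilon + \sum_{i=1}^n \mathbf{v}_{n,i} K_i\Bigr] = L'[\mathbb{K}_n],
\]
which identifies $L_{t_n} = L'[u_{t_n}]$ with $u_{t_n}=\mathbb{K}_n$ as in \eqref{hd}.

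There is no real obstacle here: the argument is pure bookkeeping once the two linear-problem compatibility identities from Assumptions \ref{A1} and \ref{A2} are in hand. The one point deserving care is the cancellation of the $\lambda^{\varepsilon+1}L\Psi_\lambda$ contributions on the two sides and the correct treatment of $(\Psi_x)_\lambda$ through the product rule, since the $\lambda$-derivative is the feature distinguishing \eqref{4.8} from a purely isospectral problem; once these are handled, the structural linearity of $W_n$ in $V_\varepsilon$ and $U_i$ does the rest.
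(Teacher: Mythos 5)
Your proof is correct and takes essentially the same approach as the paper: the paper's proof consists precisely of your second step (decomposing $W_{n}=V_{\varepsilon}+\sum_{i}\mathbf{v}_{n,i}U_{i}$ and invoking the identities in \eqref{4.3a} and \eqref{4.6} together with linearity of $L^{\prime}[\cdot]$, which is legitimate since the $\mathbf{v}_{n,i}$ are independent of $x$ and $[u]$). Your explicit cross-differentiation establishing the first equality, including the cancellation of the $\lambda^{\varepsilon+1}L\Psi_{\lambda}$ terms, is a routine verification that the paper leaves implicit, and it is carried out correctly.
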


The identity in \eqref{LaxKn} means that \eqref{LaxKn} is equivalent
with the corresponding equation $u_{t_{n}}=\mathbb{K}_{n}[u]$ in
\eqref{hd}. Note that so far the vector fields $\mathbb{K}_{n}[u]$ in
\eqref{hd} have nothing in common with $\mathbb{K}_{n}$ in  \eqref{Rm1} or \eqref{R0}.
\begin{proof}
Due to the form of $W_{n}$ we have 
\begin{align*}
\left[W_{n},L\right]+\left(W_{n}\right)_{x}-\lambda^{\varepsilon+1}L_{\lambda} & =\left[V_{\varepsilon},L\right]+\left(V_{\varepsilon}\right)_{x}-\lambda^{\varepsilon+1}L_{\lambda}+\sum_{i=1}^{n}\mathbf{v}_{n,i}\left(\left[U_{i},L\right]+\left(U_{i}\right)_{x}\right)\\
 & =L^{\prime}\left[\sigma_{\varepsilon}\right]+\sum_{i=1}^{n}\mathbf{v}_{n,i}L^{\prime}\left[K_{i}\right]\equiv L^{\prime}\left[\mathbb{K}_{n}\right].
\end{align*}
\end{proof}
So far the functions $\mathbf{v}_{n,i}$ are undetermined. Let us
now demand that the compatibility conditions 
\begin{equation}
(\Psi_{t_{m}})_{t_{n}}=(\Psi_{t_{n}})_{t_{m}},\qquad m,n=1,2,\ldots,\label{CC}
\end{equation}
for \eqref{4.8} are satisfied. These conditions are equivalent with
the Frobenius integrability conditions \eqref{Frob}, since 
\begin{align*}
 & (\Psi_{t_{m}})_{t_{n}}-(\Psi_{t_{n}})_{t_{m}}=\left(\Psi^{\prime}[\mathbb{K}_{m}]\right)_{t_{n}}-\left(\Psi^{\prime}[\mathbb{K}_{n}]\right)_{t_{m}}\\
 & =\Psi^{\prime\prime}[\mathbb{K}_{n};\mathbb{K}_{m}]+\Psi^{\prime}\left[\frac{\partial\mathbb{K}_{m}}{\partial t_{n}}+\mathbb{K}_{m}^{\prime}[\mathbb{K}_{n}]\right]-\Psi^{\prime\prime}[\mathbb{K}_{m};\mathbb{K}_{n}]-\Psi^{\prime}\left[\frac{\partial\mathbb{K}_{n}}{\partial t_{m}}+\mathbb{K}_{n}^{\prime}[\mathbb{K}_{m}]\right]\\
 & =\Psi^{\prime}\left[\frac{\partial\mathbb{K}_{m}}{\partial t_{n}}-\frac{\partial\mathbb{K}_{n}}{\partial t_{m}}+[\mathbb{K}_{n},\mathbb{K}_{m}]\right],\qquad m,n=1,2,\ldots,
\end{align*}
where $\Psi^{\prime\prime}[\mathbb{K}_{n};\mathbb{K}_{m}]=\Psi^{\prime\prime}[\mathbb{K}_{m};\mathbb{K}_{n}]$
is the second directional derivative. On the other hand \eqref{CC}
hold if and only if 
\begin{equation}
(W_{n})_{t_{m}}-(W_{m})_{t_{n}}+[W_{n},W_{m}]+\lambda^{\varepsilon+1}(W_{n})_{\lambda}-\lambda^{\varepsilon+1}(W_{m})_{\lambda}=0.\label{ZC}
\end{equation}
Let us thus investigate the conditions under which \eqref{ZC} hold.
Assuming $1\leqslant m<n,$ we have that $(\mathbf{v}_{m,i})_{t_{n}}=0$
and then the zero-curvature relations \eqref{ZC} reduce to 
\begin{equation}
\frac{\partial W_{n}}{\partial t_{m}}+W_{n}^{\prime}[\mathbb{K}_{m}]-W_{m}^{\prime}[\mathbb{K}_{n}]+[W_{n},W_{m}]+\lambda^{\varepsilon+1}(W_{n})_{\lambda}-\lambda^{\varepsilon+1}(W_{m})_{\lambda}=0\text{,}\qquad1\leqslant m<n.\label{ZC1}
\end{equation}

We can now prove the following theorem. 
\begin{thm}
\label{vu}The zero curvature conditions \eqref{ZC1} are equivalent
with the set of equations on the functions $\mathbf{v}_{n,i}$ that
is exactly the same as the set of equations \eqref{wki2} and \eqref{ab},
respectively for $\varepsilon=-1$ and $\varepsilon=0$, on the functions
$\mathbf{u}_{n,i}$.
\end{thm}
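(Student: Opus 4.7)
The plan is to substitute the explicit form \eqref{Wn} of $W_n$ into the zero-curvature equation \eqref{ZC1} and to show that the resulting expression, after simplification, produces exactly the same scalar equations on the coefficients $\mathbf{v}_{n,i}$ that the Frobenius condition \eqref{Frob} (applied to $\mathbb{K}_n$ of the form \eqref{hd}) produces on the $\mathbf{u}_{n,i}$. The key algebraic input, parallel to the identities \eqref{ZCU} and \eqref{numer} already used in the paper, is a cross-type identity between the $U_i$ and $V_\varepsilon$ mirroring the hereditary commutator $[\sigma_\varepsilon,K_i]=\kappa_iK_{\varepsilon+i}$. I would derive it first by writing the compatibility of $\Psi_{s_i}=U_i\Psi$ with $\Psi_{\tau_\varepsilon}=V_\varepsilon\Psi-\lambda^{\varepsilon+1}\Psi_\lambda$, exactly as the paper does for \eqref{numer}; the vector-field commutator contribution on the right-hand side then reads $\kappa_i U_{\varepsilon+i}\Psi$, yielding
\begin{equation*}
U_i'[\sigma_\varepsilon]-V_\varepsilon'[K_i]+[U_i,V_\varepsilon]+\lambda^{\varepsilon+1}(U_i)_\lambda=\kappa_i U_{\varepsilon+i},\qquad i=1,2,\ldots.
\end{equation*}

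Next I would expand \eqref{ZC1} after substituting $W_n=V_\varepsilon+\sum_i\mathbf{v}_{n,i}U_i$ and $W_m=V_\varepsilon+\sum_j\mathbf{v}_{m,j}U_j$, and organise the output by the generators it carries. The purely $V_\varepsilon$ pieces (including $[V_\varepsilon,V_\varepsilon]=0$ and the cancelling $\lambda^{\varepsilon+1}(V_\varepsilon)_\lambda$ contributions) vanish. The single-$\mathbf{v}_{n,i}$ cross-terms collapse via the identity above to $\mathbf{v}_{n,i}\kappa_iU_{\varepsilon+i}$ and the single-$\mathbf{v}_{m,j}$ ones to $-\mathbf{v}_{m,j}\kappa_jU_{\varepsilon+j}$. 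The bilinear $\mathbf{v}_{n,i}\mathbf{v}_{m,j}$ terms vanish by \eqref{ZCU}. Together with the explicit $t_m$-derivative contribution the equation \eqref{ZC1} reduces to
\begin{equation*}
\sum_{i=1}^n(\mathbf{v}_{n,i})_{t_m}U_i+\sum_{i=1}^n\mathbf{v}_{n,i}\kappa_iU_{\varepsilon+i}-\sum_{j=1}^m\mathbf{v}_{m,j}\kappa_jU_{\varepsilon+j}=0.
\end{equation*}
An entirely parallel expansion of the Frobenius condition \eqref{Frob} for $\mathbb{K}_m,\mathbb{K}_n$ of the form \eqref{hd}, using the hereditary relations \eqref{A}, produces the same expression with $K_k$ in place of $U_k$ and $\mathbf{v}\leftrightarrow\mathbf{u}$.

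Finally, assuming linear independence of the matrices $U_k$ (which holds in all the soliton hierarchies to be treated in the subsequent sections), collecting the coefficient of each $U_k$ converts the reduced equation into a finite list of scalar equations in the $\mathbf{v}_{n,i}$. For $\varepsilon=0$ this directly reproduces \eqref{ab}. For $\varepsilon=-1$ the shift $U_i\mapsto U_{i-1}$, combined with the boundary relation $U_0=0$ (the matrix counterpart of $K_0\equiv0$ in \eqref{A}), separates the output into the three families of \eqref{wki2}, including the isolated top equation $(\mathbf{v}_{n,n})_{t_m}=0$ which comes from the fact that $U_n$ appears only in the $t_m$-derivative sum. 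The main obstacle will be the bookkeeping of indices in the $\varepsilon=-1$ case and establishing the cross identity cleanly; the $\varepsilon=0$ case is essentially immediate once the cross identity is in hand.
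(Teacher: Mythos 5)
Your proposal follows essentially the same route as the paper's Appendix D: you derive the cross identity $U_i'[\sigma_\varepsilon]-V_\varepsilon'[K_i]+[U_i,V_\varepsilon]+\lambda^{\varepsilon+1}(U_i)_\lambda=\kappa_i U_{i+\varepsilon}$ from the mixed compatibility of the two linear problems (this is the paper's equation \eqref{ZCUV}), substitute \eqref{Wn} into \eqref{ZC1}, cancel the bilinear terms via \eqref{ZCU}, and collect coefficients of the $U_{i+\varepsilon}$ to recover \eqref{wki2} and \eqref{ab}. The only cosmetic difference is that you make explicit the linear-independence assumption on the $U_k$ (needed for the ``only if'' direction), which the paper leaves implicit; the argument is otherwise identical.
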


The proof of this theorem can be found in Appendix \ref{appD}. This
theorem means that the functions $\mathbf{v}_{n,i}$ and $\mathbf{u}_{n,i}$
pairwise coincide so that the deformed isospectral problem \eqref{4.8}
leads exactly to the Frobenius integrable hierarchy $u_{t_{n}}=\mathbb{K}_{n}[u]$
with $\mathbb{K}_{n}$ given by \eqref{Rm1} (for $\varepsilon=-1$)
or by \eqref{R0} (for $\varepsilon=0$).  Hence,
\begin{equation}
W_{n}\equiv V_{\varepsilon}+\sum_{i=1}^{n}\mathbf{u}_{n,i}(t_{0},\ldots,t_{n})U_{i}.\label{Wnu}
\end{equation}

\begin{rem}
Thus, we obtain the same non-autonomous Frobenius integrable hierarchies of PDE's starting from the deformed spectral
problem \eqref{4.8} and starting from the non-autonomous deformations
in the case of the subalgebras $\mathcal{A}_{\varepsilon}$ of the
hereditary algebra \eqref{A}, that we consider in Section \ref{s3}.
\end{rem}

\subsection{Hamiltonian structure of non-autonomous soliton hierarchies}

\label{HS}

We will now focus on soliton hierarchies. Suppose we have an infinite
hierarchy of vector fields $K_{n}$ on $\mathcal{M}$ that are bi-Hamiltonian
with respect to two compatible Poisson structures $\pi_{0}$ and $\pi_{1}$
\[
K_{n}=\pi_{0}\delta H_{n}=\pi_{1}\delta H_{n-1},\qquad n=1,2,\ldots,
\]
with $\pi_{0}$ being invertible. Then the operator $N=\pi_{1}\pi_{0}^{-1}$
is an operator with the vanishing Nijenhuis torsion so that for any
vector field $K$ on $\mathcal{M}$ we have 
\begin{equation}
\mathcal{L}_{NK}N=N\mathcal{L}_{K}N.\label{her}
\end{equation}
Any operator satisfying \eqref{her} is called a hereditary operator.
Then, 
\begin{equation}
K_{n}\equiv N^{n-1}K_{1},\qquad n=2,3,\ldots,\label{rK}
\end{equation}
and by the hereditary property \eqref{her} 
\[
\mathcal{L}_{K_{n}}N=0,\qquad n=1,2,\ldots,
\]
and 
\[
[K_{n},K_{m}]=0,\qquad n,m=1,2,\ldots\,.
\]

Define now the infinite sequence of $1$-forms 
\begin{equation}
\gamma_{n}\equiv\delta H_{n}=\bigl(N^{\dagger}\bigr)^{i}\gamma_{0},\qquad n=0,1,\ldots\,,\label{1f}
\end{equation}
where $\gamma_{0}=\delta H_{0}$ and where $N^{\dagger}=\pi_{0}^{-1}\pi_{1}$.
By the same hereditary property of $N$ they are all closed and thus
there exists an infinite sequence of functionals $H_{n}=\int h_{n}dx$
such that $\gamma_{n}=\delta H_{n}$. Define now the infinite sequence
of Poisson operators 
\[
\pi_{k}=N^{k}\pi_{0},\qquad k=2,3,\ldots\,.
\]
The operators $\pi_{k}$ are pairwise compatible and usually non-local.
Then, it follows that the field $K_{n}$ is $(n+1)$-Hamiltonian 
\[
K_{n}=\pi_{0}\delta H_{n}=\pi_{1}\delta H_{n-1}=\ldots=\pi_{n}\delta H_{0}\text{,\ensuremath{\qquad}}n=1,2,\ldots\,.
\]
Consider also a scaling vector field $\sigma_{0}$ such that 
\[
\begin{cases}
\mathcal{L}_{\sigma_{0}}K_{1}=\rho K_{1}, & \rho\in\mathbf{R,}\\
\mathcal{L}_{\sigma_{0}}N=N,
\end{cases}
\]
and define the infinite sequences of vector fields $\sigma_{n}$ on
$\mathcal{M}$ through 
\begin{equation}
\sigma_{n}=N^{n}\mathcal{\sigma}_{0},\qquad n=-1,0,1,\ldots\,.\label{rS}
\end{equation}
Then it can be shown, using the hereditary property \eqref{her},
that the vector fields \eqref{rK} and \eqref{rS} satisfy the commutation
relations \eqref{A}.

Finally, let us assume that there exist a vector field $\sigma_{-1}$
such that $\sigma_{0}=N\sigma_{-1}$ and such that it is Hamiltonian
with respect to $\pi_{0}$ 
\[
\mathcal{\sigma}_{-1}=\pi_{0}\delta F.
\]
Then, all $\sigma_{n}$ are Hamiltonian with respect to the Poisson
operator $\pi_{n+1}$, that is
\begin{equation}
\sigma_{n}=\pi_{n+1}\delta F,\qquad n=1,2,...,\label{sH}
\end{equation}
and it immediately follows that every non-autonomous vector field
$\mathbb{K}_{n}$ in \eqref{Rm1} or in \eqref{R0} is also Hamiltonian (but not bi-Hamiltonian),
as 
\begin{align*}
\mathbb{K}_{n} & \equiv\sigma_{\varepsilon}+\sum_{i=1}^{n}\mathbf{u}_{n,i}(t_{0},\ldots,t_{n})K_{i}[u]=\\
 & =\pi_{\varepsilon+1}\delta\left(F+\sum_{i=1}^{n}\mathbf{u}_{n,i}(t_{0},\ldots,t_{n})H_{i-\varepsilon-1}[u]\right).
\end{align*}

\section{Non-autonomous KdV hierarchy}

\label{s5}

In this and following sections we apply our theory, developed above,
to three well known soliton hierarchies: Korteveg-de Vries, dispersive
water waves and Ablowitz-Kaup-Newell-Segur. The all have the structure exactly as described in subsection \ref{HS}.

\subsection{KdV hierarchy}

As a first illustration of our theory, consider the KdV hierarchy.
The KdV equation 
\[
u_{t}=\frac{1}{4}u_{xxx}+\frac{3}{2}uu_{x}
\]
is a member of the bi-Hamiltonian chain of nonlinear PDE's 
\begin{equation}
u_{t_{i}}=K_{i}[u]=\pi_{0}\delta H_{i}=\pi_{1}\delta H_{i-1},\qquad i=1,2,...,\label{h0}
\end{equation}
with two compatible Poisson operators 
\[
\pi_{0}=\partial_{x},\qquad\pi_{1}=\frac{1}{4}\partial_{x}^{3}+u\partial_{x}+\frac{1}{2}u_{x}.
\]
The hierarchy \eqref{h0} is \emph{autonomous} in the sense that none
of the vector fields $K_{i}[u]$ of the hierarchy depends explicitly on the evolution
parameters $t_{j}$. The KdV hierarchy \eqref{h0} can be generated
by the recursion operator and its adjoint 
\[
N\equiv\pi_{1}\pi_{0}^{-1}=\frac{1}{4}\partial_{x}^{2}+u+\frac{1}{2}u_{x}\partial_{x}^{-1},\qquad N^{\dagger}=\frac{1}{4}\partial_{x}^{2}+u-\frac{1}{2}\partial_{x}^{-1}u_{x},
\]
in the sense that \eqref{rK} and \eqref{1f} are valid. In particular,
we find that the first vector fields $K_{n}$ have the form 
\[
\begin{split}K_{1} & =u_{x},\qquad K_{2}=\frac{1}{4}u_{xxx}+\frac{3}{2}uu_{x},\\
K_{3} & =\frac{1}{16}u_{5x}+\frac{5}{8}uu_{3x}+\frac{5}{4}u_{x}u_{xx}+\frac{15}{8}u^{2}u_{x},\\
K_{4} & =\frac{1}{64}u_{7x}+\frac{7}{32}uu_{5x}+\frac{21}{32}u_{x}u_{4x}+\frac{35}{32}u_{xx}u_{3x}+\frac{35}{32}u_{x}^{3}+\frac{35}{8}uu_{x}u_{xx}+\frac{35}{32}u^{2}u_{3x}+\frac{35}{16}u^{3}u_{x},
\end{split}
\]
the first conserved one-forms (cosymmetries) $\gamma_{n}\equiv\delta H_{n}$
are 
\begin{align*}
\gamma_{0} & =2,\qquad\gamma_{1}=u,\qquad\gamma_{2}=\frac{1}{4}u_{xx}+\frac{3}{4}u^{2},\\
\gamma_{3} & =\frac{1}{16}u_{4x}+\frac{5}{8}uu_{xx}+\frac{5}{16}u_{x}^{2}+\frac{5}{8}u^{3},\\
\gamma_{4} & =\frac{1}{64}u_{6x}+\frac{7}{32}uu_{4x}+\frac{7}{16}u_{x}u_{3x}+\frac{21}{64}u_{xx}^{2}+\frac{35}{32}u^{2}u_{xx}+\frac{35}{32}uu_{x}^{2}+\frac{35}{64}u^{4},
\end{align*}
while the first Hamiltonian densities $h_{n}$ of the conserved functionals
$H_{n}=\int h_{n}\,dx$ are 
\[
\begin{split}h_{0} & =2u,\qquad h_{1}=\frac{1}{2}u^{2},\qquad h_{2}=-\frac{1}{8}u_{x}^{2}+\frac{1}{4}u^{3},\\
h_{3} & =\frac{1}{32}u_{xx}^{2}+\frac{5}{32}u^{2}u_{xx}+\frac{5}{32}u^{4},\\
h_{4} & =-\frac{1}{128}u_{3x}^{2}+\frac{7}{64}uu_{xx}^{2}-\frac{35}{64}u^{2}u_{x}^{2}+\frac{7}{64}u^{5}.
\end{split}
\]
With the KdV hierarchy \eqref{h0} one can also relate the hierarchy
of its master symmetries \eqref{rS} with the first few $\sigma_{n}$
of the form 
\begin{align*}
\sigma_{-1} & =1,\qquad\sigma_{0}=u+\frac{1}{2}xu_{x},\\
\sigma_{1} & =\frac{1}{2}u_{xx}+\frac{1}{8}xu_{3x}+u^{2}+\frac{1}{2}xuu_{x}+\frac{1}{4}u_{x}\partial_{x}^{-1}u.
\end{align*}
The master symmetries $\sigma_{n}$ are in general non-local. They
are Hamiltonian due to \eqref{sH} with the conserved functional 
\[
F=\int xu\,dx.
\]
The symmetries $K_{i}$ and master symmetries $\sigma_{j}$ of the
KdV equation generate the hereditary algebra \eqref{A} with 
\begin{equation}
\rho=\frac{1}{2}\qquad\text{so that}\qquad\kappa_{n}=n-\frac{1}{2}.\label{kappa1}
\end{equation}

The matrix Lax representation \eqref{4.3a} for the KdV hierarchy
is given by 
\begin{equation}
L=\begin{pmatrix}0 & 1\\
\lambda-u & 0
\end{pmatrix},\label{LKdV}
\end{equation}
and 
\[
U_{n}=\frac{1}{2}\sum_{i=0}^{n-1}\begin{pmatrix}-\frac{1}{2}\left(\gamma_{i}\right)_{x} & \gamma_{i}\\
(\lambda-u)\gamma_{i}-\frac{1}{2}\left(\gamma_{i}\right)_{xx} & \frac{1}{2}\left(\gamma_{i}\right)_{x}
\end{pmatrix}\lambda^{n-i-1},\qquad n=1,2,\ldots\,.
\]
In particular, $U_{1}=L$, 
\[
U_{2}=\begin{pmatrix}-\frac{1}{4}u_{x} & \lambda+\frac{1}{2}u\\
\lambda^{2}-\frac{1}{2}u\lambda-\frac{1}{2}u^{2}-\frac{1}{4}u_{xx} & \frac{1}{4}u_{x}
\end{pmatrix}
\]
and 
\[
U_{3}=\begin{pmatrix}-\frac{1}{4}u_{x}\lambda-\frac{1}{16}(u_{3x}+6uu_{x}) & \lambda^{2}+\frac{1}{2}u\lambda+\frac{1}{8}(u_{xx}+3u^{2})\\
\lambda^{3}-\frac{1}{2}u\lambda^{2}-\frac{1}{8}(u_{xx}+u^{2})\lambda-\frac{1}{16}u_{4x}+\frac{1}{2}uu_{xx}+\frac{3}{8}u_{x}^{2}+\frac{3}{8}u^{3} & \frac{1}{4}u_{x}\lambda+\frac{1}{16}(u_{3x}+6uu_{x})
\end{pmatrix}.
\]
The Lax formulation for the hierarchy of the KdV master symmetries
$\sigma_{n}$ is given by \eqref{4.6} with $V_{n}$ of the form 
\[
V_{n}=\frac{1}{2}\sum_{i=-1}^{n-1}\begin{pmatrix}-\frac{1}{2}\sigma_{i} & \partial_{x}^{-1}\sigma_{i}\\
(\lambda-u)\partial_{x}^{-1}\sigma_{i}-\frac{1}{2}\left(\sigma_{i}\right)_{x} & \frac{1}{2}\sigma_{i}
\end{pmatrix}\lambda^{n-i-1},\qquad n=-1,0,1\ldots,
\]
so that 
\[
V_{-1}=\begin{pmatrix}0 & 0\\
0 & 0
\end{pmatrix},\qquad V_{0}=\begin{pmatrix}-\frac{1}{4} & \frac{1}{2}x\\
\frac{1}{2}(\lambda-u)x & \frac{1}{4}
\end{pmatrix}.
\]

\subsection{Non-autonomous KdV hierarchy in the case $\mathcal{A}_{-1}$}

We now present the deformed KdV hierarchy $u_{t_{n}}=\mathbb{K}_{n}[u]$
for the case of the hereditary subalgebra $\mathcal{A}_{-1}$.

For the general initial conditions \eqref{wp2} the first members
\eqref{Rm1} of the non-autonomous KdV hierarchy take the form 
\begin{align*}
u_{t_{0}} & =1,\\
u_{t_{1}} & =1+\mathbf{a}_{1,1}(t_{1})\,u_{x},\\
u_{t_{2}} & =1+\Bigl[\mathbf{a}_{2,1}(t_{2})-\frac{3}{2}(t_{0}+t_{1})\,\mathbf{a}_{2,2}(t_{2})\Bigr]u_{x}+\mathbf{a}_{2,2}(t_{2})\Bigl[\frac{1}{4}u_{xxx}+\frac{3}{2}uu_{x}\Bigr],\\
u_{t_{3}} & =1+\left[\frac{3}{2}\,\partial_{t_{2}}^{-1}\mathbf{a}_{2,2}(t_{2})+\mathbf{a}_{3,1}(t_{3})-\frac{3}{2}(t_{0}+t_{1}+t_{2})\mathbf{a}_{3,2}(t_{3})+\frac{15}{8}(t_{0}+t_{1}+t_{2})^{2}\mathbf{a}_{3,3}(t_{3})\right]u_{x}\\
 & \quad+\Bigl[\mathbf{a}_{3,2}(t_{3})-\frac{5}{2}(t_{0}+t_{1}+t_{2})\mathbf{a}_{3,3}(t_{3})\Bigr]\Bigl[\frac{1}{4}u_{xxx}+\frac{3}{2}uu_{x}\Bigr]\\
 & \quad+\mathbf{a}_{3,3}(t_{3})\Bigl[\frac{1}{16}u_{5x}+\frac{5}{8}uu_{3x}+\frac{5}{4}u_{x}u_{xx}+\frac{15}{8}u^{2}u_{x}\Bigr].
\end{align*}

If we consider the initial conditions given by the choice $\mathbf{a}_{n,i}(t_{n})=\delta_{i,n}t_{n}$,
that is \eqref{inc1}, the deformed vector fields $\mathbb{K}_{n}$
are given by \eqref{Rm1} with \eqref{def1} specified by \eqref{def12}
with $\rho$ and $\kappa_{m}$ as in \eqref{kappa1}. In this case the first few non-autonomous vector
fields $\mathbb{K}_{n}$ are given by formulas in Example \ref{E2} and so the first few members of our hierarchy specify to
\begin{align*}
u_{t_{0}} & =1,\\
u_{t_{1}} & =1+t_{1}\,u_{x},\\
u_{t_{2}} & =1-\frac{3}{2}\,(t_{0}+t_{1})t_{2}\,u_{x}+t_{2}\Bigl[\frac{1}{4}u_{xxx}+\frac{3}{2}uu_{x}\Bigr],\\
u_{t_{3}} & =1+\left[\frac{3}{4}t_{2}^{2}+\frac{15}{8}(t_{0}+t_{1}+t_{2})^{2}t_{3}\right]u_{x}-\frac{5}{2}(t_{0}+t_{1}+t_{2})t_{3}\Bigl[\frac{1}{4}u_{xxx}+\frac{3}{2}uu_{x}\Bigr]\\
 & \quad+t_{3}\Bigl[\frac{1}{16}u_{5x}+\frac{5}{8}uu_{3x}+\frac{5}{4}u_{x}u_{xx}+\frac{15}{8}u^{2}u_{x}\Bigr].
\end{align*}

The Lax representation of the above non-autonomous KdV hierarchy
is given by \eqref{LaxKn}, with $L$ as in \eqref{LKdV} and $W_{n}$
defined by \eqref{Wnu} with $\varepsilon=-1,$ where, in the general case,
$\mathbf{u}_{n,i}$ are given by \eqref{def1} and in the case of special initial conditions
\eqref{inc1} the functions $\mathbf{u}_{n,i}$ are specified by \eqref{def12}.

\subsection{Non-autonomous KdV hierarchy in the case $\mathcal{A}_{0}$}

The deformed KdV hierarchy $u_{t_{n}}=\mathbb{K}_{n}[u]$ for the
case of the hereditary subalgebra $\mathcal{A}_{0}$ and with the general initial conditions \eqref{wp}
is given by the vector fields $\mathbb{K}_{n}$ as in \eqref{R0},
thus in particular 
\begin{align*}
u_{t_{0}} & =u+\frac{1}{2}xu_{x},\\
u_{t_{1}} & =u+\frac{1}{2}xu_{x}+\mathbf{a}_{1,1}(t_{1})e^{-\frac{1}{2}t_{0}}u_{x},\\
u_{t_{2}} & =u+\frac{1}{2}xu_{x}+\Bigl[\mathbf{c}_{1,1}(t_{1})e^{-\frac{1}{2}t_{0}}+\mathbf{a}_{2,1}(t_{2})e^{-\frac{1}{2}(t_{0}+t_{1})}\Bigr]u_{x}+\mathbf{a}_{2,2}(t_{2})e^{-\frac{3}{2}(t_{0}+t_{1})}\Bigl[\frac{1}{4}u_{xxx}+\frac{3}{2}uu_{x}\Bigr],\\
u_{t_{3}} & =u+\frac{1}{2}xu_{x}+\Bigl[\mathbf{c}_{1,1}(t_{1})e^{-\frac{1}{2}t_{0}}+\mathbf{c}_{2,1}(t_{2})e^{-\frac{1}{2}(t_{0}+t_{1})}+\mathbf{a}_{3,1}(t_{3})e^{-\frac{1}{2}(t_{0}+t_{1}+t_{2})}\Bigr]u_{x}\\
 & \quad+\Bigl[e^{-\frac{3}{2}(t_{0}+t_{1})}\mathbf{c}_{2,2}(t_{2})+\mathbf{a}_{3,2}(t_{3})e^{-\frac{3}{2}(t_{0}+t_{1}+t_{2})}\Bigr]\Bigl[\frac{1}{4}u_{xxx}+\frac{3}{2}uu_{x}\Bigr]\\
 & \quad+\mathbf{a}_{3,3}(t_{3})e^{-\frac{3}{2}(t_{0}+t_{1}+t_{2})}\Bigl[\frac{1}{64}u_{7x}+\frac{7}{32}uu_{5x}+\frac{21}{32}u_{x}u_{4x}+\frac{35}{32}u_{xx}u_{3x}+\frac{35}{32}u_{x}^{3}\\
 & \qquad\qquad\qquad\qquad\qquad\quad+\frac{35}{8}uu_{x}u_{xx}+\frac{35}{32}u^{2}u_{3x}+\frac{35}{16}u^{3}u_{x}\Bigr].
\end{align*}

If we consider the initial conditions \eqref{inc0} the deformed vector
fields $\mathbb{K}_{n}$ are given by \eqref{R0} with \eqref{def}
specified by \eqref{cni} with $\rho$ and $\kappa_{m}$ given by \eqref{kappa1}. In this case, as in
Example \ref{E0}, we have 
\begin{align*}
u_{t_{0}} & =u+\frac{1}{2}xu_{x},\\
u_{t_{1}} & =u+\frac{1}{2}xu_{x}+t_{1}e^{-\frac{1}{2}t_{0}}u_{x},\\
u_{t_{2}} & =u+\frac{1}{2}xu_{x}+\left(t_{1}+2e^{-\frac{1}{2}t_{1}}-2\right)e^{-\frac{1}{2}t_{0}}u_{x}+t_{2}e^{-\frac{3}{2}(t_{0}+t_{1})}\Bigl[\frac{1}{4}u_{xxx}+\frac{3}{2}uu_{x}\Bigr],\\
u_{t_{3}} & =u+\frac{1}{2}xu_{x}+\left(t_{1}+2e^{-\frac{1}{2}t_{1}}-2\right)e^{-\frac{1}{2}t_{0}}u_{x}+\left(t_{2}+\frac{2}{3}e^{-\frac{3}{2}t_{2}}-\frac{2}{3}\right)e^{-\frac{3}{2}(t_{0}+t_{1})}\Bigl[\frac{1}{4}u_{xxx}+\frac{3}{2}uu_{x}\Bigr]\\
 & \quad+t_{3}e^{-\frac{3}{2}(t_{0}+t_{1}+t_{2})}\Bigl[\frac{1}{64}u_{7x}+\frac{7}{32}uu_{5x}+\frac{21}{32}u_{x}u_{4x}+\frac{35}{32}u_{xx}u_{3x}+\frac{35}{32}u_{x}^{3}+\frac{35}{8}uu_{x}u_{xx}\\
 & \qquad\qquad\qquad\qquad\quad+\frac{35}{32}u^{2}u_{3x}+\frac{35}{16}u^{3}u_{x}\Bigr].
\end{align*}

The Lax representation of the above non-autonomous KdV hierarchies
is given by \eqref{LaxKn}, with $L$ given by \eqref{LKdV} and with respective $W_{n}$
defined by \eqref{Wnu} with $\varepsilon=0$.

\section{Non-autonomous DWW hierarchy}

\label{s6}

\subsection{Autonomous DWW hierarchy}

Let us now apply our theory to the DWW hierarchy. Its is a bi-Hamiltonian
hierarchy given by 
\begin{equation}
\begin{pmatrix}u\\
v
\end{pmatrix}_{t_{n}}=K_{n}=\pi_{0}\gamma_{n}=\pi_{1}\gamma_{n-1},\qquad n=1,2,...,\label{D}
\end{equation}
where $\gamma_{n}\equiv\delta H_{n}$ are exact one-forms and where
\[
\pi_{0}=\begin{pmatrix}-\frac{1}{2}v\partial_{x}-\frac{1}{2}\partial_{x}v & \partial_{x}\\
\partial_{x} & 0
\end{pmatrix},\qquad\pi_{1}=\begin{pmatrix}\frac{1}{4}\partial_{x}^{3}+\frac{1}{2}u\partial_{x}+\frac{1}{2}\partial_{x}u & 0\\
0 & \partial_{x}
\end{pmatrix}
\]
are two compatible Poisson operators. This hierarchy is generated
by \eqref{rK} and \eqref{1f} with the recursion operator and its
adjoint given by 
\[
N=\pi_{1}\pi_{0}^{-1}=\begin{pmatrix}0 & \frac{1}{4}\partial_{x}^{2}+u+\frac{1}{2}u_{x}\partial_{x}^{-1}\\
1 & v+\frac{1}{2}v_{x}\partial_{x}^{-1}
\end{pmatrix},\qquad N^{\dag}=\begin{pmatrix}0 & 1\\
\frac{1}{4}\partial_{x}^{2}+u-\frac{1}{2}\partial_{x}^{-1}u_{x} & v-\frac{1}{2}\partial_{x}^{-1}v_{x}
\end{pmatrix}.
\]
In particular, we have the symmetries 
\begin{align*}
K_{1} & =\begin{pmatrix}u_{x}\\
v_{x}
\end{pmatrix},\qquad K_{2}=\begin{pmatrix}\frac{1}{4}v_{xxx}+uv_{x}+\frac{1}{2}vu_{x}\\
u_{x}+\frac{3}{2}vv_{x}
\end{pmatrix},\\
K_{3} & =\begin{pmatrix}\frac{3}{8}v^{2}u_{x}+\frac{3}{2}uvv_{x}+\frac{3}{2}uu_{x}+\frac{1}{4}u_{3x}+\frac{3}{8}vv_{3x}+\frac{9}{8}v_{x}v_{2x}\\
\frac{3}{2}vu_{x}+\frac{3}{2}uv_{x}+\frac{15}{8}v^{2}v_{x}+\frac{1}{4}v_{3x}
\end{pmatrix},
\end{align*}
cosymmetries 
\[
\gamma_{0}=\begin{pmatrix}2\\
v
\end{pmatrix},\qquad\gamma_{1}=\begin{pmatrix}v\\
u+\frac{3}{4}v^{2}
\end{pmatrix},\qquad\gamma_{2}=\begin{pmatrix}u+\frac{3}{4}v^{2}\\
\frac{1}{4}v_{xx}+\frac{3}{2}uv+\frac{5}{8}v^{3}
\end{pmatrix},
\]
and functionals $H_{n}=\int h_{n}\,dx$, where 
\[
h_{0}=2u+\frac{1}{2}v^{2},\qquad h_{1}=uv+\frac{1}{4}v^{3},\qquad h_{2}=-\frac{1}{8}v_{x}^{2}+\frac{1}{2}u^{2}+\frac{3}{4}uv^{2}+\frac{5}{32}v^{4}.
\]

With the DWW hierarchy \eqref{D} one can also relate the hierarchy
of its master symmetries \eqref{rS} with the first few $\sigma_{n}$
of the form 
\[
\sigma_{-1}=\begin{pmatrix}-v\\
2
\end{pmatrix},\qquad\sigma_{0}=\begin{pmatrix}2u+xu_{x}\\
v+xv_{x}
\end{pmatrix},\qquad\sigma_{1}=\begin{pmatrix}\frac{3}{4}v_{xx}+\frac{1}{4}xv_{xxx}+uv+xuv_{x}+\frac{1}{2}xvu_{x}\\
xu_{x}+\frac{3}{2}xvv_{x}+v^{2}+2u
\end{pmatrix}.
\]
The master symmetries $\sigma_{n}$ are Hamiltonian as in \eqref{sH}
with 
\[
F=\int\left(2xu+\frac{1}{2}xv^{2}\right)\,dx.
\]
The symmetries $K_{i}$ and the master symmetries $\sigma_{j}$ constitute
the hereditary algebra \eqref{A} with 
\begin{equation}
\rho=1\qquad\text{so that}\qquad\kappa_{n}=n.\label{kappa2}
\end{equation}

The zero-curvature formulation \eqref{4.3a} for the DWW hierarchy
is given by 
\begin{equation}
L=\begin{pmatrix}0 & 1\\
\lambda^{2}-v\lambda-u & 0
\end{pmatrix}\label{LDWW}
\end{equation}
and by 
\begin{equation}
U_{n}=\frac{1}{2}\sum_{i=0}^{n-1}\begin{pmatrix}-\frac{1}{2}\left(\gamma_{i1}\right)_{x} & \gamma_{i1}\\
\left(\lambda^{2}-v\lambda-u\right)\gamma_{i1}-\frac{1}{2}\left(\gamma_{i1}\right)_{xx} & \frac{1}{2}\left(\gamma_{i1}\right)_{x}
\end{pmatrix}\lambda^{n-i-1},\qquad n=1,2,\ldots,\label{UnDWW}
\end{equation}
with $\gamma_{i1}$ denoting the first component of $\gamma_{i}$.
In particular, $U_{1}=L$ and 
\[
U_{2}=\begin{pmatrix}-\frac{1}{4}v_{x} & \lambda+\frac{1}{2}v\\
\lambda^{3}-\frac{1}{2}v\lambda^{2}-\left(u+\frac{1}{2}v^{2}\right)\lambda-\frac{1}{2}uv-\frac{1}{4}v_{2x} & \frac{1}{4}v_{x}
\end{pmatrix},
\]
\[
U_{3}=\begin{pmatrix}-\frac{1}{4}v_{x}\lambda-\frac{1}{4}u_{x}-\frac{3}{8}vv_{x} & \lambda^{2}+\frac{1}{2}v\lambda+\frac{3}{8}v^{2}+\frac{1}{2}u\\
(U_{3})_{21} & \frac{1}{4}v_{x}\lambda+\frac{1}{4}u_{x}+\frac{3}{8}vv_{x}
\end{pmatrix},
\]
where
{\small\[
(U_{3})_{21}=\lambda^{4}-\frac{1}{2}v\lambda^{3}-\frac{1}{8}\left(4u+v^{2}\right)\lambda^{2}-\frac{1}{8}\left(8uv+3v^{3}+2v_{2x}\right)\lambda-\frac{1}{8}\left(4u^{2}+3uv^{2}+2u_{2x}+3v_{x}^{2}+3vv_{2x}\right).
\]}

The deformed Lax formulation for the hierarchy of the DWW master symmetries
$\sigma_{n}$ is given by \eqref{4.6} with $V_{n}$ of the form 
\[
V_{n}=\frac{1}{2}\sum_{i=-1}^{n-1}\begin{pmatrix}-\frac{1}{2}\sigma_{i2} & \partial_{x}^{-1}\sigma_{i2}\\
\left(\lambda^{2}-v\lambda-u\right)\partial_{x}^{-1}\sigma_{i2}-\frac{1}{2}\left(\sigma_{i2}\right)_{x} & \frac{1}{2}\sigma_{i2}
\end{pmatrix}\lambda^{n-i-1},\qquad n=-1,0,\ldots\,,
\]
with $\sigma_{i2}$ denoting the second component of $\sigma_{i}$,
thus 
\begin{equation}
V_{-1}=0,\qquad V_{0}=\begin{pmatrix}-\frac{1}{2} & x\\
\left(\lambda^{2}-v\lambda-u\right)x & \frac{1}{2}
\end{pmatrix}.\label{VyDWW}
\end{equation}

\subsection{Non-autonomous DWW hierarchy in the case $\mathcal A_{-1}$}

The first few vector fields of the deformed DWW hierarchy $u_{t_{n}}=\mathbb{K}_{n}[u]$ in the
case $\mathcal A_{-1}$ and with the general initial conditions \eqref{wp2}
have the form 
\begin{align*}
\begin{pmatrix}u\\
v
\end{pmatrix}_{t_{0}} & =\begin{pmatrix}-v\\
2
\end{pmatrix},\\
\begin{pmatrix}u\\
v
\end{pmatrix}_{t_{1}} & =\begin{pmatrix}-v\\
2
\end{pmatrix}+\mathbf{a}_{1,1}(t_{1})\begin{pmatrix}u_{x}\\
v_{x}
\end{pmatrix},\\
\begin{pmatrix}u\\
v
\end{pmatrix}_{t_{2}} & =\begin{pmatrix}-v\\
2
\end{pmatrix}+\Bigl[\mathbf{a}_{2,1}(t_{2})-2(t_{0}+t_{1})\,\mathbf{a}_{2,2}(t_{2})\Bigr]\begin{pmatrix}u_{x}\\
v_{x}
\end{pmatrix}+\mathbf{a}_{2,2}(t_{2})\begin{pmatrix}\frac{1}{4}v_{xxx}+uv_{x}+\frac{1}{2}vu_{x}\\
u_{x}+\frac{3}{2}vv_{x}
\end{pmatrix}.
\end{align*}

If we consider the initial conditions \eqref{inc1} the deformed vector
fields $\mathbb{K}_{n}$ are given by \eqref{Rm1} with \eqref{def1}
specified by \eqref{def12} with $\rho$ and $\kappa_{n}$ as in \eqref{kappa2}. Explicitely,

\begin{align*}
\begin{pmatrix}u\\
v
\end{pmatrix}_{t_{0}} & =\begin{pmatrix}-v\\
2
\end{pmatrix},\\
\begin{pmatrix}u\\
v
\end{pmatrix}_{t_{1}} & =\begin{pmatrix}-v\\
2
\end{pmatrix}+t_{1}\begin{pmatrix}u_{x}\\
v_{x}
\end{pmatrix},\\
\begin{pmatrix}u\\
v
\end{pmatrix}_{t_{2}} & =\begin{pmatrix}-v\\
2
\end{pmatrix}-2(t_{0}+t_{1})t_{2}\begin{pmatrix}u_{x}\\
v_{x}
\end{pmatrix}+t_{2}\begin{pmatrix}\frac{1}{4}v_{xxx}+uv_{x}+\frac{1}{2}vu_{x}\\
u_{x}+\frac{3}{2}vv_{x}
\end{pmatrix}.
\end{align*}

\subsection{Non-autonomous DWW hierarchy in the case $\mathcal{A}_{0}$}

The deformed DWW hierarchy $u_{t_{n}}=\mathbb{K}_{n}[u]$ in the
case $\mathcal{A}_{0}$ and with the general initial conditions \eqref{wp}
is given by the vector fields \eqref{R0}, thus in particular 
\begin{align*}
\begin{pmatrix}u\\
v
\end{pmatrix}_{t_{0}} & =\begin{pmatrix}2u+xu_{x}\\
v+xv_{x}
\end{pmatrix},\\
\begin{pmatrix}u\\
v
\end{pmatrix}_{t_{1}} & =\begin{pmatrix}2u+xu_{x}\\
v+xv_{x}
\end{pmatrix}+\mathbf{a}_{1,1}(t_{1})e^{-t_{0}}\begin{pmatrix}u_{x}\\
v_{x}
\end{pmatrix},\\
\begin{pmatrix}u\\
v
\end{pmatrix}_{t_{2}} & =\begin{pmatrix}2u+xu_{x}\\
v+xv_{x}
\end{pmatrix}+\Bigl[\mathbf{c}_{1,1}(t_{1})e^{-t_{0}}+\mathbf{a}_{2,1}(t_{2})e^{-(t_{0}+t_{1})}\Bigr]\begin{pmatrix}u_{x}\\
v_{x}
\end{pmatrix}\\
 & \quad+\mathbf{a}_{2,2}(t_{2})e^{-2(t_{0}+t_{1})}\begin{pmatrix}\frac{1}{4}v_{xxx}+uv_{x}+\frac{1}{2}vu_{x}\\
u_{x}+\frac{3}{2}vv_{x}
\end{pmatrix}.
\end{align*}
For the initial conditions \eqref{inc0} from Example \ref{E0} the
deformed vector fields $\mathbb{K}_{n}$ are given by \eqref{R0}
with \eqref{def} specified by \eqref{cni} with $\rho$ and $\kappa_{n}$ as in \eqref{kappa2}. In
this case 
\begin{align*}
\begin{pmatrix}u\\
v
\end{pmatrix}_{t_{0}} & =\begin{pmatrix}2u+xu_{x}\\
v+xv_{x}
\end{pmatrix},\\
\begin{pmatrix}u\\
v
\end{pmatrix}_{t_{1}} & =\begin{pmatrix}2u+xu_{x}\\
v+xv_{x}
\end{pmatrix}+t_{1}e^{-t_{0}}\begin{pmatrix}u_{x}\\
v_{x}
\end{pmatrix},\\
\begin{pmatrix}u\\
v
\end{pmatrix}_{t_{2}} & =\begin{pmatrix}2u+xu_{x}\\
v+xv_{x}
\end{pmatrix}+\bigl(t_{1}+(e^{-t_{1}}-1)\bigr)e^{-t_{0}}\begin{pmatrix}u_{x}\\
v_{x}
\end{pmatrix}+t_{2}e^{-2(t_{0}+t_{1})}\begin{pmatrix}\frac{1}{4}v_{xxx}+uv_{x}+\frac{1}{2}vu_{x}\\
u_{x}+\frac{3}{2}vv_{x}
\end{pmatrix}.
\end{align*}

\bigskip{}
The Lax representation for all the above non-autonomous DWW hierarchies
is given by \eqref{LaxKn} with $L$ as in \eqref{LDWW} and with
$W_{n}$ as in \eqref{Wnu} with respective $\mathbf{u}_{n,i}$ and
$\varepsilon=-1$ or $0$.

\section{Non-autonomous AKNS hierarchy}

\label{s7}

\subsection{Autonomous AKNS hierarchy}

In the last example we apply our theory to the AKNS hierarchy. It
is a bi-Hamiltonian hierarchy given by 
\[
\begin{pmatrix}q\\
r
\end{pmatrix}_{t_{n}}=K_{n}=\pi_{0}\gamma_{n}=\pi_{1}\gamma_{n-1},\qquad n=1,2,...,
\]
where $\gamma_{n}=\delta H_{n}$ are exact one-forms and where 
\[
\pi_{0}=\begin{pmatrix}0 & i\\
-i & 0
\end{pmatrix},\qquad\pi_{1}=\begin{pmatrix}-q\partial_{x}^{-1}q & -\frac{1}{2}\partial_{x}+q\partial_{x}^{-1}r\\
-\frac{1}{2}\partial_{x}+r\partial_{x}^{-1}q & -r\partial_{x}^{-1}r
\end{pmatrix}
\]
are two compatible Poisson operators ($i^{2}=-1$ throughout this
whole section). This hierarchy is generated by \eqref{rK} and
\eqref{1f} with the recursion operator and its adjoint given by 
\[
N=\pi_{1}\pi_{0}^{-1}=i\begin{pmatrix}\frac{1}{2}\partial_{x}-q\partial_{x}^{-1}r & -q\partial_{x}^{-1}q\\
r\partial_{x}^{-1}r & -\frac{1}{2}\partial_{x}+r\partial_{x}^{-1}q
\end{pmatrix},\qquad N^{\dag}=-i\begin{pmatrix}\frac{1}{2}\partial_{x}-r\partial_{x}^{-1}q & r\partial_{x}^{-1}r\\
-q\partial_{x}^{-1}q & -\frac{1}{2}\partial_{x}+q\partial_{x}^{-1}r
\end{pmatrix}.
\]
In particular, we have the symmetries 
\[
K_{1}=i\begin{pmatrix}-2q\\
2r
\end{pmatrix},\qquad K_{2}=\begin{pmatrix}q_{x}\\
r_{x}
\end{pmatrix},\qquad K_{3}=i\begin{pmatrix}\frac{1}{2}q_{xx}-q^{2}r\\
-\frac{1}{2}r_{xx}+qr^{2}
\end{pmatrix},\qquad K_{4}=\begin{pmatrix}-\frac{1}{4}q_{xxx}+\frac{3}{2}qrq_{x}\\
-\frac{1}{4}r_{xxx}+\frac{3}{2}qrr_{x}
\end{pmatrix},
\]
cosymmetries 
\[
\gamma_{1}=\begin{pmatrix}-2r\\
-2q
\end{pmatrix},\qquad\gamma_{2}=i\begin{pmatrix}r_{x}\\
-q_{x}
\end{pmatrix},\qquad\gamma_{3}=\begin{pmatrix}\frac{1}{2}r_{xx}-qr^{2}\\
\frac{1}{2}q_{xx}-rq^{2}
\end{pmatrix},\qquad\gamma_{4}=i\begin{pmatrix}-\frac{1}{4}r_{xxx}+\frac{3}{2}qrr_{x}\\
\frac{1}{4}q_{xxx}-\frac{3}{2}qrq_{x}
\end{pmatrix},
\]
and respective Hamiltonians $H_{n}=\int h_{n}\,dx$, where
\[
h_{1}=-2rq,\qquad h_{2}=-i\,rq_{x},\qquad h_{3}=-\frac{1}{2}q_{x}r_{x}-\frac{1}{2}q^{2}r^{2},\qquad h_{4}=i\left(\frac{1}{4}rq_{xxx}+\frac{3}{4}q^{2}rr_{x}\right).
\]

With the AKNS hierarchy \eqref{D} one can relate the hierarchy of
its master symmetries \eqref{rS} with $\sigma_{-1}$ and $\sigma_{0}$
given by 
\[
\sigma_{-1}=i\begin{pmatrix}-2xq\\
2xr
\end{pmatrix},\qquad\sigma_{0}=\begin{pmatrix}(xq)_{x}\\
(xr)_{x}
\end{pmatrix}.
\]
One can directly check that in this case
\begin{equation}
\rho=0\qquad\text{and}\qquad\kappa_{n}=n-1 \label{kappa3}
\end{equation}
in the hereditary algebra \eqref{A}. Moreover, the master symmetries $\sigma_{n}$ are Hamiltonian as in
\eqref{sH} with 
\[
F=-2\int xqr\,dx.
\]
The zero-curvature formulation \eqref{4.3a} for the AKNS hierarchy
is given by \cite{book1,ML} 
\begin{equation}
L=\begin{pmatrix}-i\lambda & q\\
r & i\lambda
\end{pmatrix}\label{LAKNS}
\end{equation}
and by 
\begin{equation}
U_{n}=i\begin{pmatrix}-1 & 0\\
0 & 1
\end{pmatrix}\lambda^{n-1}+\frac{i}{2}\sum_{j=1}^{n-1}\begin{pmatrix}-\partial_{x}^{-1}\left[rK_{j1}+qK_{j2}\right] & K_{j1}\\
-K_{j2} & \partial_{x}^{-1}\left[rK_{j1}+qK_{j2}\right]
\end{pmatrix}\lambda^{n-j-1},\qquad n=1,2,\ldots\,.\text{ }\label{UnAKNS}
\end{equation}
In particular,
\begin{align*}
U_{1} & =\begin{pmatrix}-i & 0\\
0 & i
\end{pmatrix},\qquad U_{2}\equiv L,\qquad U_{2}=\begin{pmatrix}-i\lambda^{2}-\frac{i}{2}qr & q\lambda+\frac{i}{2}q_{x}\\
r\lambda-\frac{i}{2}r_{x} & i\lambda^{2}+\frac{i}{2}qr
\end{pmatrix},\\
U_{3} & =\begin{pmatrix}-i\lambda^{3}-\frac{i}{2}qr\lambda+\frac{1}{4}rq_{x}-\frac{1}{4}qr_{x} & q\lambda^{2}+\frac{i}{2}q_{x}\lambda-\frac{1}{4}q_{xx}+\frac{1}{2}q^{2}r\\
r\lambda^{2}-\frac{i}{2}r_{x}\lambda-\frac{1}{4}r_{xx}+\frac{1}{2}qr^{2} & i\lambda^{3}+\frac{i}{2}qr\lambda-\frac{1}{4}rq_{x}+\frac{1}{4}qr_{x}
\end{pmatrix}.
\end{align*}

The deformed Lax formulation for the hierarchy of the master symmetries
$\sigma_{n}$ is given by \eqref{4.6} with $V_{i}$ of the form 
\[
V_{n}=i\begin{pmatrix}-x & 0\\
0 & x
\end{pmatrix}\lambda^{n+1}+\frac{i}{2}\sum_{j=-1}^{n-1}\begin{pmatrix}-\partial_{x}^{-1}\left[r\sigma_{j1}+q\sigma_{j2}\right] & \sigma_{j1}\\
-\sigma_{j2} & \partial_{x}^{-1}\left[r\sigma_{j1}+q\sigma_{j2}\right]
\end{pmatrix}\lambda^{n-j-1},
\]
where $n=-1,0,1,...$, and thus 
\begin{equation}
V_{-1}=\begin{pmatrix}-ix & 0\\
0 & ix
\end{pmatrix},\qquad V_{0}=\begin{pmatrix}-ix\lambda & xq\\
xr & ix\lambda
\end{pmatrix}.\label{VyAKNS}
\end{equation}

\subsection{Non-autonomous AKNS hierarchy in the case $\mathcal{A}_{-1}$}

The first four systems of the deformed AKNS hierarchy $u_{t_{n}}=\mathbb{K}_{n}[u]$
in the case $\mathcal{A}_{-1}$ and with the general initial conditions
\eqref{wp2} are
\begin{align*}
\begin{pmatrix}q\\
r
\end{pmatrix}_{t_{0}} & =i\begin{pmatrix}-2xq\\
2xr
\end{pmatrix},\\
\begin{pmatrix}q\\
r
\end{pmatrix}_{t_{1}} & =i\begin{pmatrix}-2xq\\
2xr
\end{pmatrix}+i\mathbf{\,a}_{1,1}(t_{1})\,\begin{pmatrix}-2q\\
2r
\end{pmatrix},\\
\begin{pmatrix}q\\
r
\end{pmatrix}_{t_{2}} & =i\begin{pmatrix}-2xq\\
2xr
\end{pmatrix}+i\Bigl[\mathbf{a}_{2,1}(t_{2})-(t_{0}+t_{1})\,\mathbf{a}_{2,2}(t_{2})\Bigr]\begin{pmatrix}-2q\\
2r
\end{pmatrix}+\mathbf{a}_{2,2}(t_{2})\,\begin{pmatrix}q_{x}\\
r_{x}
\end{pmatrix},\\
\begin{pmatrix}q\\
r
\end{pmatrix}_{t_{3}} & =i\begin{pmatrix}-2xq\\
2xr
\end{pmatrix}+i\left[\partial_{t_{2}}^{-1}\mathbf{a}_{2,2}(t_{2})+\mathbf{a}_{3,1}(t_{3})-\tau_{2}\,\mathbf{a}_{3,2}(t_{3})+\tau_{2}^{2}\,\mathbf{a}_{3,3}(t_{3})\right]\begin{pmatrix}-2q\\
2r
\end{pmatrix}\\
 & \quad+\Bigl[\mathbf{a}_{3,2}(t_{3})-2\tau_{2}\,\mathbf{a}_{3,3}(t_{3})\Bigr]\begin{pmatrix}q_{x}\\
r_{x}
\end{pmatrix}+i\mathbf{\,a}_{3,3}(t_{3})\,\begin{pmatrix}\frac{1}{2}q_{xx}-q^{2}r\\
-\frac{1}{2}r_{xx}+qr^{2}
\end{pmatrix},
\end{align*}
where $\tau_{2}=t_{0}+t_{1}+t_{2}$.

If we consider the initial conditions \eqref{inc1} the deformed vector
fields $\mathbb{K}_{n}$ are given by \eqref{Rm1} with \eqref{def1}
specified by \eqref{def12} with $\rho$ and $\kappa_{n}$ as in \eqref{kappa3}, yielding
\begin{align*}
\begin{pmatrix}q\\
r
\end{pmatrix}_{t_{0}} & =i\begin{pmatrix}-2xq\\
2xr
\end{pmatrix},\\
\begin{pmatrix}q\\
r
\end{pmatrix}_{t_{1}} & =i\begin{pmatrix}-2xq\\
2xr
\end{pmatrix}+i\,t_{1}\,\begin{pmatrix}-2q\\
2r
\end{pmatrix},\\
\begin{pmatrix}q\\
r
\end{pmatrix}_{t_{2}} & =i\begin{pmatrix}-2xq\\
2xr
\end{pmatrix}-i(t_{0}+t_{1})\,t_{2}\,\begin{pmatrix}-2q\\
2r
\end{pmatrix}+t_{2}\,\begin{pmatrix}q_{x}\\
r_{x}
\end{pmatrix},\\
\begin{pmatrix}q\\
r
\end{pmatrix}_{t_{3}} & =i\begin{pmatrix}-2xq\\
2xr
\end{pmatrix}+i\left[\frac{1}{2}t_{2}^{2}+(t_{0}+t_{1}+t_{2})^{2}\,t_{3}\right]\begin{pmatrix}-2q\\
2r
\end{pmatrix}-2(t_{0}+t_{1}+t_{2})\,t_{3}\begin{pmatrix}q_{x}\\
r_{x}
\end{pmatrix}\\
 & \quad+i\,t_{3}\,\begin{pmatrix}\frac{1}{2}q_{xx}-q^{2}r\\
-\frac{1}{2}r_{xx}+qr^{2}
\end{pmatrix}.
\end{align*}

\subsection{Non-autonomous AKNS hierarchy in the case $\mathcal{A}_{0}$}

The deformed AKNS hierarchy $u_{t_{n}}=\mathbb{K}_{n}[u]$ in the
case $\mathcal{A}_{0}$ and with the general initial conditions \eqref{wp}
is given by the vector fields \eqref{R0}; the first few of them read
\begin{align*}
\begin{pmatrix}q\\
r
\end{pmatrix}_{t_{0}} & =\begin{pmatrix}(xq)_{x}\\
(xr)_{x}
\end{pmatrix},\\
\begin{pmatrix}q\\
r
\end{pmatrix}_{t_{1}} & =\begin{pmatrix}(xq)_{x}\\
(xr)_{x}
\end{pmatrix}+i\,\mathbf{a}_{1,1}(t_{1})\begin{pmatrix}-2q\\
2r
\end{pmatrix},\\
\begin{pmatrix}q\\
r
\end{pmatrix}_{t_{2}} & =\begin{pmatrix}(xq)_{x}\\
(xr)_{x}
\end{pmatrix}+i\,\mathbf{a}_{2,1}(t_{2})\begin{pmatrix}-2q\\
2r
\end{pmatrix}+\mathbf{a}_{2,2}(t_{2})e^{-(t_{0}+t_{1})}\begin{pmatrix}q_{x}\\
r_{x}
\end{pmatrix},\\
\begin{pmatrix}q\\
r
\end{pmatrix}_{t_{3}} & =\begin{pmatrix}(xq)_{x}\\
(xr)_{x}
\end{pmatrix}+i\,\mathbf{a}_{3,1}(t_{3})\begin{pmatrix}-2q\\
2r
\end{pmatrix}\\
 & \quad+\Bigl[e^{-(t_{0}+t_{1})}\mathbf{c}_{2,2}(t_{2})+\mathbf{a}_{3,2}(t_{3})e^{-\tau_{2}}\Bigr]\begin{pmatrix}q_{x}\\
r_{x}
\end{pmatrix}+i\,\mathbf{a}_{3,3}(t_{3})e^{-2\tau_{2}}\begin{pmatrix}\frac{1}{2}q_{xx}-q^{2}r\\
-\frac{1}{2}r_{xx}+qr^{2}
\end{pmatrix},
\end{align*}
where $\tau_{2}=t_{0}+t_{1}+t_{2}$. Notice that, in this case $\kappa_{1}=0$
and thus all $\mathbf{c}_{r,1}=0$ due to \eqref{c2}.

For the initial conditions \eqref{inc0} the deformed vector fields
$\mathbb{K}_{n}$ are given by \eqref{R0} with \eqref{def} specified
by \eqref{cni} with $\rho$ and $\kappa_{n}$ as in \eqref{kappa3}, which yields
\begin{align*}
\begin{pmatrix}q\\
r
\end{pmatrix}_{t_{0}} & =\begin{pmatrix}(xq)_{x}\\
(xr)_{x}
\end{pmatrix},\\
\begin{pmatrix}q\\
r
\end{pmatrix}_{t_{1}} & =\begin{pmatrix}(xq)_{x}\\
(xr)_{x}
\end{pmatrix}+i\,t_{1}\begin{pmatrix}-2q\\
2r
\end{pmatrix},\\
\begin{pmatrix}q\\
r
\end{pmatrix}_{t_{2}} & =\begin{pmatrix}(xq)_{x}\\
(xr)_{x}
\end{pmatrix}+i\Bigl[\Bigl(t_{1}-\frac{1}{\kappa_{1}}\Bigr)e^{-\kappa_{1}t_{0}}+\frac{1}{\kappa_{1}}e^{-\kappa_{1}(t_{0}+t_{1})}\Bigr]\begin{pmatrix}-2q\\
2r
\end{pmatrix}+t_{2}e^{-\kappa_{2}(t_{0}+t_{1})}\begin{pmatrix}q_{x}\\
r_{x}
\end{pmatrix},\\
\begin{pmatrix}q\\
r
\end{pmatrix}_{t_{3}} & =\begin{pmatrix}(xq)_{x}\\
(xr)_{x}
\end{pmatrix}+i\Bigl[\Bigl(t_{1}-\frac{1}{\kappa_{1}}\Bigr)e^{-\kappa_{1}t_{0}}+\frac{1}{\kappa_{1}}e^{-\kappa_{1}(t_{0}+t_{1})}\Bigr]\begin{pmatrix}-2q\\
2r
\end{pmatrix},\\
 & \quad+\Bigl[\Bigl(t_{2}-\frac{1}{\kappa_{2}}\Bigr)e^{-\kappa_{2}(t_{0}+t_{1})}+\frac{1}{\kappa_{2}}e^{-\kappa_{2}(t_{0}+t_{1}+t_{2})}\Bigr]\begin{pmatrix}q_{x}\\
r_{x}
\end{pmatrix}+i\,t_{3}e^{-\kappa_{3}(t_{0}+t_{1}+t_{2})}\begin{pmatrix}\frac{1}{2}q_{xx}-q^{2}r\\
-\frac{1}{2}r_{xx}+qr^{2}
\end{pmatrix}.
\end{align*}

\bigskip{}
The Lax representation for all the above non-autonomous AKNS hierarchies
are given by \eqref{LaxKn} with $L$ as in \eqref{LDWW} and with
$W_{n}$ as in \eqref{Wnu} with respective $\mathbf{u}_{n,i}$ and
$\varepsilon=-1$ or $0$.

\appendix

\section{%
\mbox{%
}}

\label{appA}

In this appendix we prove Theorem \ref{lemacik}. Consider the following
linear integral operators 
\[
\xi_{i}=1+\partial_{t_{i}}^{-1}\ad_{\mathbb{K}_{i}},\qquad i=0,1,\ldots,n-1
\]
acting in $\mathcal{A}$ (notice that $\xi_{i}=\xi_{i}(t_{0},\ldots,t_{i})$
due to \eqref{w1}). Then, by \eqref{B}, 
\[
\xi_{i}^{-1}=1-\partial_{t_{i}}^{-1}\ad_{\mathbb{K}_{i}}\xi_{i}^{-1}.
\]
Therefore 
\begin{align}
\partial_{t_{i}}\xi_{i} & =\partial_{t_{i}}+\ad_{\mathbb{K}_{i}},\nonumber \\
\partial_{t_{i}}\xi_{i}^{-1} & =\partial_{t_{i}}-\ad_{\mathbb{K}_{i}}\xi_{i}^{-1},\label{2.5}
\end{align}
and 
\[
\partial_{t_{i}}\xi_{j}=\xi_{j}\partial_{t_{i}}\quad\text{for}\quad j<i,
\]
while 
\[
\partial_{t_{i}}\xi_{j}^{-1}=-\xi_{j}^{-1}\frac{\partial\xi_{j}}{\partial t_{i}}\xi_{j}^{-1}+\xi_{j}^{-1}\partial_{t_{i}}=-\xi_{j}^{-1}[\xi_{j},\ad_{\mathbb{K}_{i}}]\xi_{j}^{-1}+\xi_{j}^{-1}\partial_{t_{i}}\quad\text{for}\quad j>i.
\]
Note that $\partial_{t_{i}}\xi_{j}$ and $\frac{\partial\xi_{j}}{\partial t_{i}}$
are two different operators, as $\partial_{t_{i}}\xi_{j}=\frac{\partial\xi_{j}}{\partial t_{i}}+\xi_{j}\partial_{t_{i}}$.
The formula above is due to the fact that for any time-dependent operator
$\xi_{j}$ 
\begin{equation}
\frac{\partial\xi_{j}^{-1}}{\partial t_{i}}=-\xi_{j}^{-1}\frac{\partial\xi_{j}}{\partial t_{i}}\xi_{j}^{-1}\label{Om1}
\end{equation}
and the fact that the explicit time dependence of $\xi_{j}$ on $t_{i}$
is 
\begin{align}
\frac{\partial\xi_{j}}{\partial t_{i}} & =\partial_{t_{j}}^{-1}\frac{\partial\ad_{\mathbb{K}_{j}}}{\partial t_{i}}\overset{\eqref{w1}}{=}-\partial_{t_{j}}^{-1}\ad_{[\mathbb{K}_{i},\mathbb{K}_{j}]}=-\partial_{t_{j}}^{-1}[\ad_{\mathbb{K}_{i}},\ad_{\mathbb{K}_{j}}]=[\partial_{t_{j}}^{-1}\ad_{\mathbb{K}_{j}},\ad_{\mathbb{K}_{i}}]\label{3}\\
 & =[\xi_{j},\ad_{\mathbb{K}_{i}}]\quad\text{for}\quad i<j.\nonumber 
\end{align}
Notice that $\frac{\partial\xi_{j}}{\partial t_{i}}=0$ for $i>j$.

Denote now $\bm{t}=(t_{0},\ldots,t_{n-1})$ and consider \eqref{2},
that is 
\[
\mathbb{K}\left(\bm{t}\right)=\xi_{n-1}^{-1}\cdots\xi_{1}^{-1}\xi_{0}^{-1}\bar{\mathbb{K}}.
\]
We will now show that $\mathbb{K}\left(\bm{t}\right)$ is the solution
of the IVP \eqref{1}. Naturally, $\mathbb{K}\left(\bm{0}\right)=\bar{\mathbb{K}}$.
Using \eqref{Om1} we obtain 
\begin{align*}
\frac{\partial\mathbb{K}(\bm{t})}{\partial t_{i}} & =\partial_{t_{i}}\xi_{n-1}^{-1}\cdots\xi_{1}^{-1}\xi_{0}^{-1}\bar{\mathbb{K}}\\
 & =-\xi_{n-1}^{-1}\frac{\partial\xi_{n-1}}{\partial t_{i}}\xi_{n-1}^{-1}\xi_{n-2}^{-1}\cdots\xi_{1}^{-1}\xi_{0}^{-1}\bar{\mathbb{K}}+\xi_{n-1}^{-1}\partial_{t_{i}}\xi_{n-2}^{-1}\cdots\xi_{1}^{-1}\xi_{0}^{-1}\bar{\mathbb{K}}=\cdots=\\
 & =-\sum_{j=i+1}^{n-1}\xi_{n-1}^{-1}\cdots\xi_{j}^{-1}\frac{\partial\xi_{j}}{\partial t_{i}}\xi_{j}^{-1}\cdots\xi_{i+1}^{-1}\xi_{i}^{-1}\cdots\xi_{0}^{-1}\bar{\mathbb{K}}+\xi_{n-1}^{-1}\cdots\xi_{i+1}^{-1}\partial_{t_{i}}\xi_{i}^{-1}\cdots\xi_{0}^{-1}\bar{\mathbb{K}}
\end{align*}
and further by \eqref{3} and \eqref{2.5} 
\begin{align*}
\frac{\partial\mathbb{K}(\bm{t})}{\partial t_{i}} & =-\sum_{j=i+1}^{n-1}\xi_{n-1}^{-1}\cdots\xi_{j}^{-1}[\xi_{j},\ad_{\mathbb{K}_{i}}]\xi_{j}^{-1}\cdots\xi_{i+1}^{-1}\xi_{i}^{-1}\cdots\xi_{0}^{-1}\bar{\mathbb{K}}-\xi_{n-1}^{-1}\cdots\xi_{i+1}^{-1}\ad_{\mathbb{K}_{i}}\xi_{i}^{-1}\cdots\xi_{0}^{-1}\bar{\mathbb{K}}\\
 & =-\ad_{\mathbb{K}_{i}}\xi_{n-1}^{-1}\cdots\xi_{i+1}^{-1}\xi_{i}^{-1}\cdots\xi_{0}^{-1}\bar{\mathbb{K}}-\sum_{j=i+1}^{n-2}\xi_{n-1}^{-1}\cdots\xi_{j+1}^{-1}\ad_{\mathbb{K}_{i}}\xi_{j}^{-1}\cdots\xi_{i+1}^{-1}\xi_{i}^{-1}\cdots\xi_{0}^{-1}\bar{\mathbb{K}}\\
 & \quad+\sum_{j=i+1}^{n-1}\xi_{n-1}^{-1}\cdots\xi_{j}^{-1}\ad_{\mathbb{K}_{i}}\xi_{j-1}^{-1}\cdots\xi_{i}^{-1}\xi_{i-1}^{-1}\cdots\xi_{0}^{-1}\bar{\mathbb{K}}-\xi_{n-1}^{-1}\cdots\xi_{i+1}^{-1}\ad_{\mathbb{K}_{i}}\xi_{i}^{-1}\cdots\xi_{0}^{-1}\bar{\mathbb{K}}\\
 & =-\ad_{\mathbb{K}_{i}}\xi_{n-1}^{-1}\cdots\xi_{0}^{-1}\bar{\mathbb{K}}\equiv-\ad_{\mathbb{K}_{i}}\mathbb{K}(\bm{t}).
\end{align*}

\section{%
\mbox{%
}}

\label{appB}

We prove here \eqref{wki2} and \eqref{ab} for the cases $\varepsilon=-1$
and $\varepsilon=0$, respectively. In either case we have 
\[
[\sigma_{\varepsilon},K_{i}]=\kappa_{i}K_{i+\varepsilon},\qquad i\geqslant1,\quad K_{0}\equiv0.
\]
Consider \eqref{R0} and \eqref{Rm1}, thus 
\[
\mathbb{K}_{n}=\sigma_{\varepsilon}+\sum_{i=1}^{n}\mathbf{u}_{n,i}(t_{0},\ldots,t_{n})K_{i},\qquad\varepsilon=0,-1.
\]
Then for $m<n$: 
\begin{align*}
 & \frac{\partial\mathbb{K}_{n}}{\partial t_{m}}+\left[\mathbb{K}_{m},\mathbb{K}_{n}\right]=\sum_{i=1}^{n}(\mathbf{u}_{n,i})_{t_{m}}K_{i}+\sum_{i=1}^{n}\mathbf{u}_{n,i}\left[\sigma_{\varepsilon},K_{i}\right]+\sum_{j=1}^{m}\mathbf{u}_{m,j}\left[K_{j},\sigma_{\varepsilon}\right]\\
 & \qquad=\sum_{i=1}^{n}(\mathbf{u}_{n,i})_{t_{m}}K_{i}+\sum_{i=1-\varepsilon}^{n}\kappa_{i}\mathbf{u}_{n,i}K_{i+\varepsilon}-\sum_{i=1-\varepsilon}^{m}\kappa_{i}\mathbf{u}_{m,i}K_{i+\varepsilon}\\
 & \qquad=-\varepsilon(\mathbf{u}_{n,n})_{t_{m}}K_{n}+\sum_{i=m+1}^{n}\left[(\mathbf{u}_{n,i+\varepsilon})_{t_{m}}+\kappa_{i}\mathbf{u}_{n,i}\right]K_{i+\varepsilon}\\
 & \qquad\quad+\sum_{i=1-\varepsilon}^{m}\left[(\mathbf{u}_{n,i+\varepsilon})_{t_{m}}+\kappa_{i}(\mathbf{u}_{n,i}-\mathbf{u}_{m,i})\right]K_{i+\varepsilon}.
\end{align*}
Thus, $\frac{\partial\mathbb{K}_{n}}{\partial t_{m}}+\left[\mathbb{K}_{m},\mathbb{K}_{n}\right]=0$
for $m<n$ if and only if 
\[
\begin{cases}
\varepsilon(\mathbf{u}_{n,n})_{t_{m}}=0,\\
(\mathbf{u}_{n,i})_{t_{m}}+\kappa_{i-\varepsilon}\mathbf{u}_{n,i-\varepsilon}=0, & m+1+\varepsilon\leqslant i\leqslant n+\varepsilon,\\
(\mathbf{u}_{n,i})_{t_{m}}+\kappa_{i-\varepsilon}\left(\mathbf{u}_{n,i-\varepsilon}-\mathbf{u}_{m,i-\varepsilon}\right)=0, & 1\leqslant i\leqslant m+\varepsilon,
\end{cases}
\]
which yields \eqref{ab} for $\varepsilon=0$ and \eqref{wki2} for
$\varepsilon=-1$.

\section{%
\mbox{%
}}

\label{appC}

We prove here \eqref{cntj}. Differentiating \eqref{cn} with respect to $t_{j}$
yields 
\begin{align*}
(\mathbf{c}_{n})_{t_{j}} & =\sum_{m=j+1}^{n-1}\sum_{r=3}^{m}\sum_{s=1}^{r-2}\frac{(-1)^{r-1}[\kappa_{r}]!}{(r-s-2)!}(\tau_{m-1})^{r-s-2}\left(\partial_{t_{m}}^{-1}\right)^{s}\mathbf{a}_{m,r}(t_{m})\\
 & \quad+\sum_{r=2}^{j}\sum_{s=1}^{r-1}\frac{(-1)^{r-1}[\kappa_{r}]!}{(r-s-1)!}(\tau_{j-1})^{r-s-1}\left(\partial_{t_{j}}^{-1}\right)^{s-1}\mathbf{a}_{j,r}(t_{j})\\
 & \quad+\sum_{r=2}^{n}\frac{(-1)^{r}[\kappa_{r}]!}{(r-2)!}(\tau_{n-1})^{r-2}\mathbf{a}_{n,r}(t_{n}),
\end{align*}
and further 
\begin{align*}
(\mathbf{c}_{n})_{t_{j}} & =\sum_{m=j}^{n-1}\sum_{r=3}^{m}\sum_{s=1}^{r-2}\frac{(-1)^{r-1}[\kappa_{r}]!}{(r-s-2)!}(\tau_{m-1})^{r-s-2}\left(\partial_{t_{m}}^{-1}\right)^{s}\mathbf{a}_{m,r}(t_{m})\\
 & \quad-\sum_{r=3}^{j}\sum_{s=1}^{r-2}\frac{(-1)^{r-1}[\kappa_{r}]!}{(r-s-2)!}(\tau_{j-1})^{r-s-2}\left(\partial_{t_{j}}^{-1}\right)^{s}\mathbf{a}_{j,r}(t_{j})\\
 & \quad+\sum_{r=2}^{j}\sum_{s=1}^{r-1}\frac{(-1)^{r-1}[\kappa_{r}]!}{(r-s-1)!}(\tau_{j-1})^{r-s-1}\left(\partial_{t_{j}}^{-1}\right)^{s-1}\mathbf{a}_{j,r}(t_{j})\\
 & \quad+\sum_{r=2}^{n}\frac{(-1)^{r}[\kappa_{r}]!}{(r-2)!}(\tau_{n-1})^{r-2}\mathbf{a}_{n,r}(t_{n}).
\end{align*}
Thus 
\begin{align*}
(\mathbf{c}_{n})_{t_{j}} & =\sum_{m=j}^{n-1}\sum_{r=3}^{m}\sum_{s=1}^{r-2}\frac{(-1)^{r-1}[\kappa_{r}]!}{(r-s-2)!}(\tau_{m-1})^{r-s-2}\left(\partial_{t_{m}}^{-1}\right)^{s}\mathbf{a}_{m,r}(t_{m})\\
 & \quad+\left(\sum_{r=2}^{j}\sum_{s=0}^{r-2}-\sum_{r=3}^{j}\sum_{s=1}^{r-2}\right)\frac{(-1)^{r-1}[\kappa_{r}]!}{(r-s-2)!}(\tau_{j-1})^{r-s-2}\left(\partial_{t_{j}}^{-1}\right)^{s}\mathbf{a}_{j,r}(t_{j})\\
 & \quad+\sum_{r=2}^{n}\frac{(-1)^{r}[\kappa_{r}]!}{(r-2)!}(\tau_{n-1})^{r-2}\mathbf{a}_{n,r}(t_{n}),
\end{align*}
and finally 
\begin{align*}
(\mathbf{c}_{n})_{t_{j}} & =\left(\sum_{m=2}^{n-1}-\sum_{m=2}^{j-1}\right)\sum_{r=3}^{m}\sum_{s=1}^{r-2}\frac{(-1)^{r-1}[\kappa_{r}]!}{(r-s-2)!}(\tau_{m-1})^{r-s-2}\left(\partial_{t_{m}}^{-1}\right)^{s}\mathbf{a}_{m,r}(t_{m})\\
 & \quad+\left(\sum_{r=2}^{n}-\sum_{r=2}^{j}\right)\frac{(-1)^{r}[\kappa_{r}]!}{(r-2)!}(\tau_{n-1})^{r-2}\mathbf{a}_{n,r}(t_{n})\\
 & \equiv(\mathbf{c}_{n})_{t_{0}}-(\mathbf{c}_{j})_{t_{0}}.
\end{align*}

\section{%
\mbox{%
}}

\label{appD}

We prove here Theorem \ref{vu}. We start by observing that, due to
\eqref{A} 
\[
(\Psi_{s_{i}})_{\tau_{j}}-(\Psi_{\tau_{j}})_{s_{i}}=\Psi^{\prime}\bigl[[\sigma_{j},K_{i}]\bigr]=\kappa_{i}\Psi^{\prime}[K_{i+j}]=\kappa_{i}\Psi_{s_{i+j}},
\]
which implies 
\[
(U_{i})_{\tau_{j}}-(V_{j})_{s_{i}}+[U_{i},V_{j}]+\lambda^{j+1}(U_{i})_{\lambda}=\kappa_{i}U_{i+j},
\]
or, equivalently, 
\[
U_{i}^{\prime}[\sigma_{j}]-V_{j}^{\prime}[K_{i}]+[U_{i},V_{j}]+\lambda^{j+1}(U_{i})_{\lambda}=\kappa_{i}U_{i+j},
\]
so that, in particular 
\begin{equation}
U_{i}^{\prime}[\sigma_{\varepsilon}]-V_{\varepsilon}^{\prime}[K_{i}]+[U_{i},V_{\varepsilon}]+\lambda^{\varepsilon+1}(U_{i})_{\lambda}=\kappa_{i}U_{i+\varepsilon},\qquad\varepsilon=-1\text{ or }0.\label{ZCUV}
\end{equation}
Obviously 
\[
\frac{\partial W_{n}}{\partial t_{m}}=\sum_{i=1}^{n}(\mathbf{v}_{n,i})_{t_{m}}U_{i}.
\]
Further 
\begin{align*}
 & W_{n}^{\prime}[\mathbb{K}_{m}]-W_{m}^{\prime}[\mathbb{K}_{n}]\overset{\eqref{Wn}}{=}V_{\varepsilon}^{\prime}[\mathbb{K}_{m}]+\sum_{i=1}^{n}\mathbf{v}_{n,i}U_{i}^{\prime}[\mathbb{K}_{m}]-V_{\varepsilon}^{\prime}[\mathbb{K}_{n}]-\sum_{j=1}^{m}\mathbf{v}_{m,j}U_{j}^{\prime}[\mathbb{K}_{n}]\\
 & \overset{\eqref{hd}}{=}\sum_{j=1}^{m}\mathbf{v}_{m,j}\left(V_{\varepsilon}^{\prime}[K_{j}]-U_{j}^{\prime}[\sigma_{\varepsilon}]\right)-\sum_{i=1}^{n}\mathbf{v}_{n,i}\left(V_{\varepsilon}^{\prime}[K_{i}]-U_{i}^{\prime}[\sigma_{\varepsilon}]\right)+\sum_{i=1}^{n}\sum_{j=1}^{m}\mathbf{v}_{n,i}\mathbf{v}_{m,j}\left(U_{i}^{\prime}[K_{j}]-U_{j}^{\prime}[K_{i}]\right)\\
 & \overset{\eqref{ZCUV},\eqref{ZCU}}{=}\sum_{j=1}^{m}\mathbf{v}_{m,j}\left([U_{j},V_{\varepsilon}]+\lambda^{\varepsilon+1}(U_{j})_{\lambda}-\kappa_{j}U_{j+\varepsilon}\right)\\
 & \qquad-\sum_{i=1}^{n}\mathbf{v}_{n,i}\left([U_{i},V_{\varepsilon}]+\lambda^{\varepsilon+1}(U_{i})_{\lambda}-\kappa_{i}U_{i+\varepsilon}\right)-\sum_{i=1}^{n}\sum_{j=1}^{m}\mathbf{v}_{n,i}\mathbf{v}_{m,j}[U_{i},U_{j}],
\end{align*}
and 
\[
[W_{n},W_{m}]\overset{\eqref{Wn}}{=}\sum_{j=1}^{m}\mathbf{v}_{m,j}[V_{\varepsilon},U_{j}]+\sum_{i=1}^{n}\mathbf{v}_{n,i}[U_{i},V_{\varepsilon}]+\sum_{i=1}^{n}\sum_{j=1}^{m}\mathbf{v}_{n,i}\mathbf{v}_{m,j}[U_{i},U_{j}],
\]
and finally 
\[
\lambda^{\varepsilon+1}(W_{n})_{\lambda}-\lambda^{\varepsilon+1}(W_{m})_{\lambda}\overset{\eqref{Wn}}{=}\lambda^{\varepsilon+1}\sum_{i=1}^{n}\mathbf{v}_{n,i}(U_{i})_{\lambda}-\lambda^{\varepsilon+1}\sum_{j=1}^{m}\mathbf{v}_{m,j}(U_{j})_{\lambda}.
\]
Thus, for $m<n$ 
\begin{align*}
 & \frac{\partial W_{n}}{\partial t_{m}}+W_{n}^{\prime}[\mathbb{K}_{m}]-W_{m}^{\prime}[\mathbb{K}_{n}]+[W_{n},W_{m}]+\lambda^{\varepsilon+1}(W_{n})_{\lambda}-\lambda^{\varepsilon+1}(W_{m})_{\lambda}=\\
 & \qquad=\sum_{i=1}^{n}(\mathbf{v}_{n,i})_{t_{m}}U_{i}-\sum_{j=1-\varepsilon}^{m}\kappa_{i}\mathbf{v}_{m,i}U_{i+\varepsilon}+\sum_{i=1-\varepsilon}^{n}\kappa_{i}\mathbf{v}_{n,i}U_{i+\varepsilon}\\
 & \qquad=-\varepsilon(\mathbf{v}_{n,n})_{t_{m}}U_{n}+\sum_{i=m+1}^{n}\left[(\mathbf{v}_{n,i+\varepsilon})_{t_{m}}+\kappa_{i}\mathbf{v}_{n,i}\right]U_{i+\varepsilon}\\
 & \qquad\quad+\sum_{i=1-\varepsilon}^{m}\left[(\mathbf{v}_{n,i+\varepsilon})_{t_{m}}+\kappa_{i}(\mathbf{v}_{n,i}-\mathbf{v}_{m,i})\right]U_{i+\varepsilon}=0.
\end{align*}
Thus, \eqref{ZC1} holds if and only if 
\[
\begin{cases}
\varepsilon(\mathbf{v}_{n,n})_{t_{m}}=0,\\
(\mathbf{v}_{n,i})_{t_{m}}+\kappa_{i-\varepsilon}\mathbf{v}_{n,i-\varepsilon}=0, & m+1+\varepsilon\leqslant i\leqslant n+\varepsilon\\
(\mathbf{v}_{n,i})_{t_{m}}+\kappa_{i-\varepsilon}\left(\mathbf{v}_{n,i-\varepsilon}-\mathbf{v}_{m,i-\varepsilon}\right)=0, & 1\leqslant i\leqslant m+\varepsilon
\end{cases}
\]
which coincides with equations \eqref{wki2} for $\varepsilon=-1$ and
with \eqref{ab} for $\varepsilon=0$. Thus, in both cases, $\mathbf{v}_{n,i}\equiv\mathbf{u}_{n,i}$
for all $n,i$.

\end{document}